\newcommand{\keywords}[1]{%
  \par\noindent\textbf{Keywords: } #1
}
\newcommand{\QPU}{\mathsf{QPU}}
\newcommand{\cc}{\mathsf{c.c.}}
\newcommand{\ham}{\operatorname{ham}}
\newcommand{\rank}{\operatorname{rank}}
\newcommand{\CPrank}{\operatorname{CPrank}}
\newcommand{\supp}{\operatorname{supp}}
\newtheorem{lemma}{Lemma}
\newtheorem{theorem}{Theorem}
\newtheorem{definition}{Definition}
\title{Scalable Multi-QPU Circuit Design for Dicke State Preparation: Optimizing Communication Complexity and Local Circuit Costs}
\author{
Ziheng Chen$^{1, 2}$,
Junhong Nie$^{3}$,
Xiaoming Sun$^{1, 2,}$\thanks{Corresponding authors. sunxiaoming@ict.ac.cn; zhujiadong2016@163.com},
Jialin Zhang$^{1, 2}$,
Jiadong Zhu$^{1, *}$ \\
{\small
$^{1}$ Institute of Computing Technology, Chinese Academy of Sciences \\
$^{2}$ University of Chinese Academy of Sciences \\
$^{3}$ Shandong University
}
}
\date{\today}
\begin{document}

\maketitle

\begin{abstract}
    Preparing large-qubit Dicke states is of broad interest in quantum computing and quantum metrology. However, the number of qubits available on a single quantum processing unit (QPU) is limited---motivating the distributed preparation of such states across multiple QPUs as a practical approach to scalability. In this article, we investigate the distributed preparation of $n$-qubit $k$-excitation Dicke states $D(n,k)$ across a general number $p$ of QPUs, presenting a distributed quantum circuit (each QPU hosting approximately $\lceil n/p \rceil$ qubits) that prepares the state with communication complexity $O(p \log k)$, circuit size $O(nk)$, and circuit depth $O\left(p^2 k + \log k \log (n/k)\right)$. To the best of our knowledge, this is the first construction to simultaneously achieve logarithmic communication complexity and polynomial circuit size and depth.
    
    We also establish a lower bound on the communication complexity of $p$-QPU distributed state preparation for a general target state. This lower bound is formulated in terms of the canonical polyadic rank (CP-rank) of a tensor associated with the target state. For the special case $p = 2$, we explicitly compute the CP-rank corresponding to the Dicke state $D(n,k)$ and derive a lower bound of $\lceil\log (k + 1)\rceil$, which shows that the communication complexity of our construction matches this fundamental limit.
\end{abstract}

\keywords{Distributed quantum computing, communication complexity, Dicke state, quantum state preparation, quantum circuit synthesis.}

\section{Introduction}

Despite the rapid progress of quantum hardware in recent decades, a fundamental limitation persists in the number of qubits that can be controlled and made to interact coherently on a single quantum processing unit (QPU) \cite{preskill2018quantum, arute2019quantum, bharti2022noisy}. Consequently, distributed quantum computing (DQC) has emerged as a natural, practically grounded solution for executing quantum tasks that demand large qubit counts \cite{denchev2008distributed,cacciapuoti2020quantum,caleffi2024distributed,barral2025review,long2022interfacing}. In DQC, quantum computing tasks (e.g., quantum state preparation, unitary operation execution) are implemented collaboratively across interconnected QPUs---devices linked via quantum or classical communication channels---thereby circumventing the qubit-capacity bottleneck of standalone devices while remaining fully compatible with hardware constraints \cite{li2025space, main2025distributed, davarzani2022hierarchical, ferrari2023modular, ghodsollahee2021connectivity}. DQC is usually modelled using distributed quantum circuits---collections of quantum circuits that support non-local two-qubit gates, which act on data qubits belonging to different QPUs \cite{main2025distributed, ghodsollahee2021connectivity, sarvaghad2021general, dadkhah2021new, wu2023entanglement, davis2023towards}. In DQC research, besides the traditional metrics used to quantify intra-processor computing costs (such as quantum circuit size and depth), one must also optimize inter-QPU communication costs---specifically, the number of non-local two-qubit gates \cite{ghodsollahee2021connectivity, chen2025circuit, zomorodi2018optimizing, bandic2023mapping, burt2024generalised}\footnote{While communication cost is typically quantified by the number of exchanged qubits in quantum communication complexity research, the count of non-local two-qubit gates serves as a more natural metric within the circuit model. Notably, these two measures are equivalent up to a constant factor---and thus equivalent in the complexity-theoretic sense.}.

Dicke states \cite{dicke1954coherence}---equal superpositions of all computational-basis states with fixed hamming weight---play an important role in several related areas of quantum information science, such as quantum networking\cite{prevedel2009experimental, chiuri2012experimental}, quantum game theory\cite{ozdemir2007necessary}, quantum tomography\cite{toth2010permutationally} and quantum metrology\cite{hyllus2012fisher, ouyang2021robust}.
Formally, for an $n$-qubit system, a Dicke state with $k$ excitations (denoted $D(n, k)$) is defined as
\begin{equation*}
    D(n, k) = \frac{1}{\sqrt{\binom{n}{k}}} \sum_{\substack{x \in \{0,1\}^n \\ \ham(x)=k}} \ket{x},
\end{equation*}
where the hamming weight $\ham(x)=k$ denotes the number of 1-valued bits in the $n$-bit string $x$.
Owing to their theoretically attractive properties and practical utility, a wealth of research has emerged in recent years focused on the efficient preparation of Dicke states---specifically, the optimization of the size and depth of the quantum circuits required to generate such states \cite{bartschi2019deterministic, bartschi2022short, yuan2025depth, yu2026efficient, li2025preparation}.


The potential applications of Dicke state preparation encompass quantum computing tasks demanding a large number of qubits; for instance, Dicke states $D(n, n / 2)$ surpass the classical precision limit of $O(1/\sqrt{n})$ for $n$ particles and attain the Heisenberg limit of $O(1/n)$ in quantum metrology, a scaling that enables far higher precision for large qubit counts \cite{hyllus2012fisher, ouyang2021robust}. For this reason, investigating low-complexity Dicke state preparation methods within the DQC paradigm is critical to advancing such large-scale quantum operations. In the DQC paradigm, two existing approaches can be used for preparing Dicke states across interconnected QPUs, though each suffers from notable complexity-related drawbacks. The first approach adapts circuit designs for general distributed state preparation to the specific case of Dicke states, while the second splits a single-QPU Dicke state preparation circuit into subcircuits that are executed across distributed QPUs. The former prioritizes minimizing the number of non-local gates, thereby lowering communication complexity. For instance, extending the general distributed state preparation framework from \cite{ambainis2003quantum} to Dicke states yields a logarithmic communication cost for the 2-QPU scenario\footnote{In fact, \cite{ambainis2003quantum} establishes that both the upper and lower bounds on communication complexity for general distributed state preparation over 2 QPUs equal the log-rank of a matrix associated with the target state. This work, however, does not address the special case of Dicke states; accordingly, the logarithmic upper and lower bound results for Dicke states are derived by the authors. For further details, see Section \ref{sec:lower}.}. However, this design abstracts away local computation costs, ignoring the exponential circuit size and depth incurred on each individual QPU. In contrast, the second approach partitions the qubit register of a single-QPU Dicke state preparation circuit into disjoint subsets, each assigned to a separate QPU \cite{ghodsollahee2021connectivity, burt2024generalised}. Since state-of-the-art single-QPU Dicke state preparation circuits achieve polynomial scaling in both circuit size and depth \cite{bartschi2019deterministic, bartschi2022short, yuan2025depth}, this qubit-partitioning strategy preserves polynomial circuit complexity on each QPU---a marked improvement over the first approach’s prohibitive exponential local complexity. That said, the method incurs a communication complexity that is at least polynomial in scale, representing an exponential degradation relative to the first approach’s logarithmic communication cost. Crucially, from the perspective of experimental quantum implementations, inter-processor communication imposes far more stringent bottlenecks than intra-processor computing operations in distributed quantum systems: communication introduces qubit decoherence, transmission latency, and hardware synchronization challenges that are far harder to mitigate than local gate noise. This is why, even though the second approach circumvents the first’s catastrophic local complexity issues, its polynomial communication cost still renders it undesirable for large-scale distributed Dicke state preparation.

In this paper, we seek to address the limitations of the two aforementioned approaches and propose a distributed Dicke state preparation circuit that simultaneously optimizes both communication overhead and local computation costs. Formally, we investigate the following problem: Given a Dicke state $D(n, k)$ and $p$ QPUs, each allocated approximately $\lceil n/p\rceil$ qubits\footnote{In addition to the $\lceil n/p\rceil$ data qubits, a modest number of ancillary qubits are permitted per QPU, but their usage is tightly restricted. This per-QPU qubit limit reflects practical hardware constraints: while relaxing it would enable state preparation using fewer QPUs, doing so violates the foundational goal of distributed quantum computing (DQC) research---addressing the qubit capacity limitations of standalone devices.}, can we construct a distributed quantum circuit spanning $p$ QPUs such that the intra-QPU computation cost---i.e., circuit size and depth---scales polynomially, while the inter-QPU communication complexity scales logarithmically?

\subsection{Our Contributions}
 
We answer the problem in the affirmative. Specifically, for a Dicke state $D(n,k)$, we obtain the following results:

\begin{itemize}
    \item \textbf{Efficient circuit design:} We propose a distributed Dicke state preparation circuit for deployment across $p$ QPUs, where each QPU is equipped with approximately $\lceil n/p\rceil$ qubits. This circuit achieves logarithmic-scaling communication complexity, alongside polynomial circuit size and depth. A detailed performance comparison is provided in Table \ref{tab:comparison}. Notably, for constant $p$, the circuit depth and size match those of the state-of-the-art non-distributed preparation.
    \item \textbf{Communication complexity lower bound:} We establish that for any target state $\ket{\psi}$, a communication complexity lower bound for preparing $\ket{\psi}$ accross $p$ QPUs is given by the logarithm of the CP-rank of a $p$-order tensor associated with $\ket{\psi}$. Unfortunately, computing the CP-rank of a tensor is NP-hard for $p \ge 3$\cite{haastad1990tensor}, and we have not yet succeeded in evaluating the CP-rank of Dicke states for this general case. In the special case $p = 2$, our CP-rank-based lower bound reduces to the rank-based lower bound derived in \cite{ambainis2003quantum}. In this 2-QPU scenario, we explicitly compute the lower bound, which equals $\log k$---a result that shows our construction’s upper bound is tight.
\end{itemize}

\begin{table*}[hbtp]
    \centering
    \caption{Resource costs for preparing the Dicke state $D(n,k)$ across $p$ QPUs, assuming for simplicity that $p$ divides $n$.}
    \begin{tabular}{c|cccc}
        Reference & Size & Depth & Qubits per QPU & Communication Complexity \\
        \hline
        \cite{ambainis2003quantum} ($p = 2$) & $O(2^n)$ & $O(2^n)$ & $n/2$ & $O(\log k)$ \\
        \cite{yuan2025depth} & $O(nk)$ & $O(k + \log(k) \log(n/k))$ & $n/p$ & $\Omega(k)$\\
        Ours & $O(nk)$ & $O(p^2 k + \log(k) \log(n/k))$ & $n/p + 2$ & $O(p \log k)$
    \end{tabular}
    \label{tab:comparison}
\end{table*}

\subsection{Paper Organization}

The remainder of the paper is organized as follows. In Section \ref{sec:pre}, we introduce key background concepts necessary for understanding the construction and analysis of our proposed distributed quantum circuit. In Section \ref{sec:2QPU}, to make our presentation more clear, we focus on $2$-QPU scenario present our circuit design for the  and establish its optimality by analyzing both its communication complexity and local circuit complexity. In Section \ref{sec:pQPU}, we generalize the aforementioned results to the general $p$-QPU case. In Section \ref{sec:lower}, we study the lower bound of communication complexity for dsitributed quantum state preparation. In Section \ref{sec:conclusion}, we conclude our paper and discuss some open and future problems.

\section{Preliminaries} \label{sec:pre}

In the remainder of the paper, we adopt the notation $[n] = \{0, 1, \dots ,n - 1\}$.

\subsection{Distributed Quantum Computing}

In this subsection, we briefly review fundamental concepts in quantum computing, and refer readers to \cite{nielsen2010quantum} for a comprehensive treatment. We then transition to DQC, where we formally define key metrics for evaluating the implementation cost of distributed quantum circuits.

In quantum computing, qubits serve as the fundamental units of information. The state of a single qubit can be a superposition of the computational basis states $\ket{0}$ and $\ket{1}$, and is written as $\alpha \ket{0} + \beta \ket{1}$, where $\alpha, \beta \in \mathbb{C}$ and $|\alpha|^2 + |\beta|^2 = 1$. More generally, the state of an $n$-qubit system can be expressed as
\begin{equation*}
    \sum_{x \in [2]^n} \alpha_x \ket{x}, \quad \alpha_x \in \mathbb{C},
\end{equation*}
where $x$ denotes a binary string of length $n$ and the amplitudes satisfy the normalization condition
\begin{equation*}
    \sum_{x \in [2]^n} |\alpha_x|^2 = 1.
\end{equation*}
Quantum gates form the elementary operations on qubits, with each gate represented by a unitary matrix that describes its action on the qubit space. The most widely adopted model for quantum computing is the circuit model, which consists of a sequence of quantum gates applied to selected qubits. Following standard quantum computing conventions, in this paper we restrict our focus to circuits constructed from a natural universal gate set, comprising CNOT gates and arbitrary single-qubit gates. This gate set is universal, meaning it can synthesize any arbitrary quantum operation with arbitrary precision. 

Within the framework of DQC, we partition a regular quantum circuit across QPUs. The parent circuit is decomposed into interconnected subcircuits that interact via inter-QPU communication gates---i.e., gates that act on qubits located on different QPUs. Similar to non-distributed quantum computing, two key metrics in DQC are circuit size and depth: the size captures the overall computational cost, while the depth reflects the temporal cost of execution, constrained by the coherence time of the qubits. In DQC, these metrics quantify local computation cost, as opposed to communication cost. Instead of considering the individual circuit of each QPU in isolation, we evaluate both metrics by treating the quantum circuits across different QPUs as a single, unified parent circuit. The formal definitions of circuit size and depth are as follows.

\begin{definition}[Circuit Size] \label{size}
    The \emph{size} of a distributed quantum circuit is defined as the total number of quantum gates in the unified parent circuit, excluding inter-QPU communication gates.
\end{definition}
\begin{definition}[Circuit Depth] \label{depth}
    The \emph{depth} of a distributed quantum circuit is the number of sequential gate layers in the unified parent circuit. Gates acting on disjoint qubit sets---capable of parallel execution---are grouped into the same layer.
\end{definition}

Quantum gates acting on qubits located across different QPUs incur substantially higher physical realization costs, owing to the inter-QPU communication required for their execution. Consequently, it is necessary to introduce a dedicated metric that quantifies the count of these non-local gates, rather than relying solely on conventional circuit size metrics. We formalize this metric as follows.

\begin{definition}[Communication Complexity] \label{cc}
    Let $C$ be a quantum circuit on $n$ qubits consisting only of one- and two-qubit gates. Let $f : [n] \to [p]$ be a partition that assigns the $n$ qubits to $p$ QPUs. The \emph{communication complexity} of $C$ with respect to the partition $f$, denoted by $\cc^f(C)$, is the number of two-qubit gates whose operands are assigned to different QPUs under $f$.

    For a quantum state $\ket{\psi}$, its \emph{communication complexity on $p$ QPUs} is defined as
    \begin{equation*}
        \cc_p(\ket{\psi}) := \min_{\substack{\mathrm{balanced} f:[n]\to [p] \\C: C\ket{0} = \ket{\psi}}} \cc^f(C),
    \end{equation*}
    where a partition $f$ is balanced if, for every $i\in[p]$, the size of its preimage $|f^{-1}(i)|$ is either $\lfloor n/p \rfloor$ or $\lceil n/p \rceil+1$.
\end{definition}

\subsection{Dicke State}

In this subsection, we present the formal definition of the Dicke state $D(n,k)$ and the concept of the Dicke unitary---a key operation that underpins the non-distributed Dicke states preparation.

\begin{definition}[Hamming Weight]
    The Hamming weight of a binary string $x$, denoted by $\ham(x)$, is the number of $1$s in $x$.
\end{definition}

\begin{definition}[Dicke State]
    Dicke state $D(n, k)$ is an $n$-qubit quantum state that is an equal superposition of all computational basis states with Hamming weight $k$.
    \begin{equation*}
        D(n, k) = \frac{1}{\sqrt{\binom{n}{k}}} \sum_{\substack{x \in \{0,1\}^n \\ \ham(x)=k}} \ket{x}.
    \end{equation*}
\end{definition}

Existing non-distributed approaches to preparing Dicke states do more than merely generate a single target state \cite{bartschi2019deterministic, bartschi2022short, yuan2025depth}. In fact, these methods implement a broader transformation that maps a set of basis states to the corresponding Dicke states. This motivates the following definition.

\begin{definition}[Dicke Unitary] \label{Dickeunitary}
    A Dicke unitary $U^n_k$ is defined by
    \begin{equation*}
        U^n_k \ket{0^{n - j} 1^j} = D(n, j), \quad \forall j \in [k + 1].
    \end{equation*}
\end{definition}

The state-of-the-art synthesis of the Dicke unitary in the non-distributed setting, which will be useful for our circuit design in Sections \ref{sec:2QPU} and \ref{sec:pQPU}, is due to \cite{yuan2025depth}. Specifically, they establish the following result. 

\begin{lemma}[\cite{yuan2025depth}] \label{non-disDicke25}
    For integers $n \ge k \ge 0$, the Dicke unitary $U^n_k$ can be implemented by a quantum circuit consisting of CNOT gates and single-qubit gates with circuit depth $O(k + \log k \log(n/k))$ and size $O(nk)$.
\end{lemma}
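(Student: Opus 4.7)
The plan is divide-and-conquer on the qubit register, driven by the Dicke-state splitting identity: for any partition of the $n$ qubits into blocks of sizes $n_1$ and $n_2 = n - n_1$, and every $j \in [k+1]$,
\begin{equation*}
D(n, j) = \sum_{i=0}^{j} \sqrt{\frac{\binom{n_1}{i}\binom{n_2}{j-i}}{\binom{n}{j}}} \, D(n_1, i) \otimes D(n_2, j-i).
\end{equation*}
This yields the factorisation $U^n_k = (U^{n_1}_k \otimes U^{n_2}_k) \cdot S^n_{n_1, k}$, where the \emph{split unitary} $S^n_{n_1,k}$ sends each input $\ket{0^{n-j}1^j}$ to the corresponding superposition of product basis states on the two halves. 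With a cheap realisation of $S^n_{n_1,k}$ in hand, the two half-size Dicke unitaries execute in parallel on disjoint qubits.

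The core technical step is to implement $S^n_{n_1,k}$ in depth $O(\log k)$ and size $O(k^2)$. Since the unitary only redistributes Hamming weight across the block boundary, it acts nontrivially on only $O(k)$ qubits per side. I would realise it as $k$ stages of controlled single-qubit rotations in which the $j$th stage conditionally transfers one unit of weight across the boundary, and parallelise the stages by broadcasting the control information through a binary tree of CNOT gates into ancilla qubits. Breaking the apparent sequential dependence between stages, likely via a parallel-prefix style commitment of the rotation angles, is the delicate ingredient, and is what distinguishes this result from earlier $O(k)$-depth splitters.

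Finally, I would unroll the recursion with balanced splits $n_1 = \lceil n/2 \rceil$, halting once the block size drops to $\Theta(k)$ and handling that base case with the classical Bärtschi--Eidenbenz ladder, which synthesises $U^{\Theta(k)}_k$ in depth $O(k)$ and size $O(k^2)$. The depth recurrence $T(n) = T(n/2) + O(\log k)$ with $T(\Theta(k)) = O(k)$ solves to $T(n) = O\bigl(k + \log k \cdot \log(n/k)\bigr)$, matching the claim. For size, summing the $O(k^2)$ splitter cost over the $2^\ell$ internal nodes at each recursion level gives $\sum_{\ell=0}^{\log(n/k)} 2^\ell \cdot O(k^2) = O(nk)$, which also dominates the contribution of the $O(n/k)$ base-case leaves. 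The main obstacle throughout is the $O(\log k)$-depth splitter: a naïve weight-transfer cascade has sequential dependencies that force $\Omega(k)$ depth, so one must carefully copy control information and commit rotations in tree-shaped parallel fashion; everything else is routine once $S^n_{n_1,k}$ is in place.
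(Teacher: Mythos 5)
You should first note a framing point: the paper does not prove Lemma \ref{non-disDicke25} at all --- it is imported as a black box from \cite{yuan2025depth}, so there is no in-paper argument to compare against, and your proposal must be judged on its own merits as a reconstruction of that external result.

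On those merits there is a genuine gap, and it sits exactly where you flagged ``the delicate ingredient.'' Your skeleton is the standard divide-and-conquer route and is fine: the weight-splitting identity, the factorisation $U^n_k = (U^{n_1}_k \otimes U^{n_2}_k)\, S^n_{n_1,k}$, balanced recursion down to blocks of size $\Theta(k)$ with the B\"artschi--Eidenbenz ladder at the leaves, and the size accounting $\sum_{\ell} 2^\ell \cdot O(k^2) = O(nk)$. But the central claim --- that $S^n_{n_1,k}$ can be realised \emph{uniformly} in depth $O(\log k)$ and size $O(k^2)$ --- is impossible at the lower recursion levels by pure counting: a splitter at level $\ell$ acts entirely within a block of $s = n/2^\ell$ qubits, and any circuit of size $\Theta(k^2)$ on $O(s)$ wires has depth $\Omega(k^2/s)$; for $s = \Theta(k)$ (the levels just above your base case) this is $\Omega(k)$, not $O(\log k)$. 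No parallel-prefix or fan-out trick can evade a width-times-depth lower bound, so the splitter you hope to construct does not exist as specified. The mechanism that actually yields $O(k + \log k \log(n/k))$ is a \emph{level-dependent} splitter depth: with the $\approx n/2^\ell$ ancillae available inside a block one gets depth $O\bigl(\log k + k^2 2^\ell / n\bigr)$ at level $\ell$ (e.g.\ via controlled state preparation on binary-encoded weights in the spirit of Lemma \ref{ctrlQSP} with $m \approx n/2^\ell$), and the sum over levels telescopes --- $\sum_\ell O(\log k) = O(\log k \log(n/k))$, while $\sum_\ell k^2 2^\ell / n$ is a geometric series dominated by its last term, $O(k)$. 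Your recurrence $T(n) = T(n/2) + O(\log k)$ happens to land on the right total only because the $O(k)$ you charge to the base case silently absorbs the cost that should be charged to the deep splitters; as written, the key sub-lemma of your proof is false, and the argument does not go through without replacing it by the level-dependent bound (or an equivalent mechanism).
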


\subsection{Tensor and Canonical Polyadic Rank}

In this subsection, we introduce the definition of tensors and the canonical polyadic rank (CP-rank)---a metric that plays a pivotal role in our analysis of communication complexity lower bounds.
A tensor is a natural generalization of vectors and matrices, extending the structure of one-dimensional vectors and two-dimensional matrices to higher-order, multi-dimensional arrays.

\begin{definition}[Tensor] \label{deftensor}
    A $p$-order tensor $T = (T_{i_0 i_1 \dots i_{p-1}})$ is a $p$-dimensional array of numbers, where each index $i_j, j\in [p]$ corresponds to one mode of the tensor.
\end{definition}

When $p = 1$, a tensor reduces to a vector, and when $p = 2$, it coincides with a matrix. In this work, we need an extension of the classical notion of matrix rank to higher-order tensors.

\begin{definition}[Canonical Polyadic Rank (CP-rank)] \label{defcprank}
    For a $p$-order tensor $T$, a canonical polyadic decomposition (CP-decomposition) of $T$ is an expression of the form
    \begin{equation*}
        T = \sum_{i \in [r]} v_i^{(0)} \otimes v_i^{(1)} \otimes \cdots \otimes v_i^{(p-1)},
    \end{equation*}
    where $v_i^{(0)}, v_i^{(1)}, \dots, v_i^{(p-1)}$ are vectors for each $i \in [r]$, and $\otimes$ denotes the outer product.
    The canonical polyadic rank, or CP-rank, of $T$, denoted by $\CPrank(T)$, is the smallest integer $r$ for which such a decomposition exists.
\end{definition}

Intuitively, the CP-rank measures the minimum number of rank-one tensors needed to express $T$ as a sum of separable components, generalizing the concept of matrix rank to higher-order settings. In particular, when $p = 2$, the CP-rank coincides with the usual matrix rank: for any matrix $M$, we have $\rank(M) = \CPrank(M)$.

\subsection{Some Useful Circuit Implementations}

In this subsection, we collect several circuit implementation results that will be instrumental to our circuit design.

Lemma \ref{QSP} characterizes the asymptotically optimal cost for preparing an arbitrary quantum state, while Lemma \ref{ctrlQSP} extends this result to the controlled state preparation setting.

\begin{lemma}[State Preparation, \cite{sun2023asymptotically, yuan2023optimal}] \label{QSP}
    Given $m$ ancillary qubits, any $n$-qubit quantum state $\ket{\psi}$ can be prepared by a quantum circuit consisting of CNOT gates and single-qubit gates. The circuit has depth $O\left(n + \frac{2^{n}}{n+m}\right)$ and size $O(2^{n})$.
\end{lemma}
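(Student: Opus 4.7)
The plan is to follow the line of attack used in the two cited works. First, I decompose $\ket{\psi}$ via the standard chain rule: write the conditional marginal amplitudes $\braket{x_1\cdots x_i|\psi}$ as rotation angles and express $\ket{\psi}$ as a product of $n$ uniformly controlled single-qubit rotations, where the $i$-th rotation targets qubit $i$ and is classically controlled on the values of qubits $1,\dots,i-1$. A uniformly controlled rotation with $i-1$ controls requires $O(2^{i-1})$ elementary gates; summed over $i$ this already yields size $O(2^n)$, but the naive serial schedule has depth $\Theta(2^n)$.

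Second, to shrink the depth I would exploit the $m$ ancillas to parallelize each uniformly controlled rotation in QROM/select-swap style. Split the $i-1$ controls into a ``row'' register of size $a$ and a ``column'' register of size $b = i-1-a$: a balanced fanout/decoding tree computes $2^a$ row-indicator ancillas in depth $O(a)$, after which the required rotations can be applied by sweeping through the $2^b$ column values, each sweep taking depth $O(1)$. Choosing $2^a \asymp n+m$ gives per-round depth $O(\log(n+m) + 2^{i-1}/(n+m))$ while preserving the $O(2^{i-1})$ gate count.

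Third, summing the per-round depths naively gives $O(n\log(n+m) + 2^n/(n+m))$, which is slightly weaker than the target. To sharpen the linear-in-$n$ term I would amortize the row-decoding work across rounds: since the control register of round $i+1$ extends that of round $i$ by a single qubit, the decoded row ancillas can be incrementally updated rather than recomputed from scratch, collapsing the aggregate decoding cost to $O(n+\log(n+m))$. The final depth becomes $O(n + 2^n/(n+m))$, and the total size remains $O(\sum_{i=1}^{n} 2^{i-1}) = O(2^n)$.

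The main obstacle is the incremental, pipelined amortization in the last step. One must maintain the QROM-style ancilla register coherently while the control register grows, ensure that stale ancilla contents are uncomputed exactly when they are no longer needed, and interleave the decoding, rotation, and uncomputation phases so that they run concurrently without depth collisions. Designing and verifying this schedule---while guaranteeing that the ancilla budget never exceeds $m$ at any time step---is where the bulk of the technical work in \cite{sun2023asymptotically, yuan2023optimal} lies.
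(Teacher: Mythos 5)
The paper itself contains no proof of Lemma~\ref{QSP}: it is imported verbatim from \cite{sun2023asymptotically, yuan2023optimal}, so your sketch can only be judged against the constructions in those references. Your high-level plan does match theirs---chain-rule decomposition into $n$ uniformly controlled rotations of size $O(2^{i-1})$, ancilla-based parallelization of each round, and amortization of the per-round setup so that the additive term is $O(n)$ rather than $O(n\log(n+m))$---and your final size and depth accounting is the right shape.

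However, step 2 as you have written it fails. If the $i$-th rotation targets qubit $i$ and you decode the row register into $2^a$ one-hot indicator ancillas, then a ``sweep'' at a fixed column value must apply $2^a$ singly-controlled rotations that all act on the \emph{same} target qubit; gates sharing a qubit cannot be placed in one layer, so each sweep costs depth $\Theta(2^a)$, not $O(1)$, and the round collapses back to depth $\Theta(2^{i-1})$ with no gain. The fact that only one indicator is hot does not help---the circuit must contain all branches. The cited works avoid this by never rotating the target in parallel: each uniformly controlled rotation is first reduced (by conjugation with single-qubit gates) to diagonal unitaries, whose $2^{i-1}$ phase factors \emph{commute} and can therefore be distributed across $\Theta(n+m)$ qubits, each holding a parity of the control bits that is updated along a Gray-code path, one CNOT layer plus one $R_z$ layer per step. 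Commutativity of diagonal phases is exactly the property that licenses depth-$O(1)$ sweeps; controlled rotations on a common target have no such property. Two further points: your indicator register requires $2^a \asymp n+m$ qubits while only $m$ ancillas are available---this is repaired by borrowing the $n-i$ data qubits still in $\ket{0}$ during round $i$, which is how the denominator $n+m$ arises in the references---and the incremental amortization in your step 3, which you correctly identify as the crux, is asserted rather than constructed; in \cite{sun2023asymptotically} it is realized by explicit prefix-copy and uncomputation stages whose scheduling is the bulk of the proof.
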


\begin{lemma}[Controlled State Preparation, \cite{yuan2023optimal}] \label{ctrlQSP}
    Let $k,m \ge 0$ and $n > 0$ be integers. Given a family of $n$-qubit quantum states $\{\ket{\psi_i}\}_{i \in [2^k]}$, the controlled state preparation
    \begin{equation*}
        \ket{i}\ket{0}^{\otimes n} \mapsto \ket{i}\ket{\psi_i}, \forall i \in [2^k],
    \end{equation*}
    can be implemented by a quantum circuit consisting of CNOT gates and single-qubit gates. The circuit has depth $O\left(n + k + \frac{2^{n+k}}{n+k+m}\right)$, size $O(2^{n+k})$, and uses $m$ ancillary qubits.
\end{lemma}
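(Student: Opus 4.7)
The plan is to reduce controlled state preparation to the unconditional state preparation result (Lemma~\ref{QSP}) applied to an $(n+k)$-qubit system. Information-theoretically this is natural: the family $\{\ket{\psi_i}\}_{i \in [2^k]}$ specifies $2^{n+k}$ amplitudes in total, matching the information content of an arbitrary $(n+k)$-qubit pure state. The target bounds of depth $O(n+k + 2^{n+k}/(n+k+m))$ and size $O(2^{n+k})$ coincide exactly with those of Lemma~\ref{QSP} for $n+k$ qubits with $m$ ancillas, which strongly suggests such a reduction is tight.

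The first attempt I would try is via an auxiliary state. Define
\begin{equation*}
    \ket{\Psi} = \frac{1}{\sqrt{2^k}} \sum_{i \in [2^k]} \ket{i} \otimes \ket{\psi_i}.
\end{equation*}
By Lemma~\ref{QSP}, there exists a circuit $V$ with the claimed depth and size that prepares $\ket{\Psi}$ from $\ket{0}^{\otimes(n+k)}$. Note that the desired unitary $U$ satisfies $U \bigl( 2^{-k/2} \sum_i \ket{i}\ket{0^n} \bigr) = \ket{\Psi}$, which gives hope that a suitable pre- and post-composition of $V$ with basis-preparing unitaries on the control register recovers $U$. However, $V$ is specified only by its action on the fiducial input $\ket{0^{n+k}}$, so such a composition does not automatically produce the correct action on each individual basis input $\ket{i}\ket{0^n}$.

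A more promising strategy is therefore to open the construction behind Lemma~\ref{QSP} and adapt it recursively. Splitting the control register along a single bit yields the decomposition $U = \ket{0}\bra{0} \otimes U_0 + \ket{1}\bra{1} \otimes U_1$, with each $U_b$ an instance of controlled state preparation on one fewer control qubit. The resulting recursion immediately gives the size bound $O(2^{n+k})$, but naively gives a depth bound of $O\bigl(2^k \cdot (n + 2^n/(n+m))\bigr)$, which exceeds the target. Closing this gap is where the technical difficulty concentrates: the $m$ ancillary qubits must be used to parallelize multiplexed operations across the $2^k$ control configurations, so that structure shared between different branches is exploited rather than duplicated. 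The main obstacle I anticipate is precisely this combinatorial bookkeeping, which must simultaneously yield the additive $O(n+k)$ term and the correct $2^{n+k}/(n+k+m)$ leading term in the depth.
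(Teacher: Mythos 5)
There is a genuine gap, though you have correctly mapped where it lies. First, a point of comparison: the paper does not prove Lemma~\ref{ctrlQSP} at all---it is imported verbatim from \cite{yuan2023optimal}---so the relevant benchmark is that paper's construction. Your rejection of the first attempt is right, and for the right reason: the controlled state preparation is an isometry prescribed on all $2^k$ orthogonal inputs $\ket{i}\ket{0^n}$, which is strictly more than preparing the single superposition $\ket{\Psi}$, and Lemma~\ref{QSP} constrains a circuit only on the one input $\ket{0^{n+k}}$, so no pre-/post-composition can be extracted from it. (Indeed, $\ket{\Psi}$ alone does not even determine the family $\{\ket{\psi_i}\}$ up to the freedom the isometry must respect: the circuit from Lemma~\ref{QSP} may act arbitrarily on the other basis inputs.)

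Your second strategy, however, ends by naming the difficulty rather than resolving it, and both of its quantitative claims need repair. On size: the bare recursion $U=\ket{0}\bra{0}\otimes U_0+\ket{1}\bra{1}\otimes U_1$ does \emph{not} immediately give $O(2^{n+k})$, because each recursion level adds one control wire, so the base-level gates become $k$-fold-controlled rotations costing $\Theta(k)$ elementary gates each, i.e.\ $O(k\,2^{n+k})$ naively. The clean size bound comes from collapsing the recursion into uniformly controlled (multiplexed) rotations: the CSP is a ladder of $n$ stages in which the $j$-th stage is a single-qubit rotation multiplexed over the $k$ control qubits together with the $j$ already-prepared target qubits, and the Gray-code implementation of a multiplexor over $t$ controls uses $2^t$ CNOTs and $2^t$ rotations, summing geometrically to $O(2^{n+k})$. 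On depth---the gap you flagged---the missing mechanism is that each such multiplexor is, up to layers of single-qubit gates, a diagonal unitary on $k+j+1$ qubits, and diagonal unitaries admit the ancilla-parallelized implementation underlying Lemma~\ref{QSP} \cite{sun2023asymptotically}, with depth $O\left(t+\frac{2^{t}}{t+m}\right)$ and size $O(2^{t})$ on $t$ qubits with $m$ ancillae. Summing over the stages gives depth $O\left((n+k)^2+\frac{2^{n+k}}{n+k+m}\right)$, and the finer scheduling of \cite{yuan2023optimal} brings the additive term down to $O(n+k)$. (For what it is worth, the cruder $(n+k)^2$ additive term would already suffice for every invocation of the lemma in this paper, where $n+k = O(\log k_{\mathrm{exc}})$ for the excitation number $k_{\mathrm{exc}}$ and the $O(k_{\mathrm{exc}})$ polynomial term dominates.) As it stands, your proposal is a correct plan that locates the obstacle, but without the multiplexor-to-diagonal reduction and its ancilla-parallelized implementation it does not establish the stated depth bound, and so is not yet a proof.
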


Lemma \ref{adder} characterizes the cost of implementing an addition circuit.

\begin{lemma}[Ripple-Carry Adder, \cite{remaud2024optimizing}] \label{adder}
    Let $\ket{a}$ and $\ket{b}$ denote two $n$-bit integers encoded in binary, and let $\ket{z}$ be a single-qubit register. The adder
    \begin{equation*}
        \ket{a}\ket{b}\ket{z} \mapsto \ket{a}\ket{a+b}\ket{z \oplus (a+b)_n},
    \end{equation*}
    where $(a+b)_n$ denotes the most significant (carry-out) bit of the sum $a+b$, can be implemented by a quantum circuit of depth $O(n)$ and size $O(n)$.
\end{lemma}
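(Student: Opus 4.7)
The plan is to construct a quantum ripple-carry adder based on the classical full-adder logic, following the Cuccaro et al.\ style built from MAJ (majority) and UMA (unmajority-and-add) subroutines. Since both the size and depth budgets are only $O(n)$, a carry-lookahead approach is unavailable; instead, I would exploit the fact that sequential carry propagation needs only a constant number of CNOT and Toffoli operations per bit position, and no ancillary qubits beyond those already present in $\ket{a}$, $\ket{b}$, and $\ket{z}$. Because the stated gate set is CNOT plus single-qubit gates, every Toffoli I introduce will be replaced by its standard $O(1)$-size decomposition, which leaves the asymptotic bounds intact.

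First I would recall the classical relations at bit position $i$: the sum bit is $s_i = a_i \oplus b_i \oplus c_i$ and the outgoing carry is $c_{i+1} = \mathrm{MAJ}(a_i, b_i, c_i)$, with $c_0 = 0$. To realize these reversibly without an auxiliary carry register, I would define a MAJ block acting on the triple $(c_i, b_i, a_i)$ that writes the carry $c_{i+1}$ onto the qubit currently holding $a_i$ using two CNOTs and one Toffoli, while leaving enough information on the other two qubits to later recover $a_i$ and $b_i$. Each MAJ is paired with a UMA block that inverts the MAJ transformation and, in the same constant-size circuit, deposits the correct sum bit $s_i$ into the qubit holding $b_i$ while restoring the $a$-qubit to its original value.

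The overall adder then chains $n$ MAJ blocks in order of increasing bit index, culminating in a single CNOT that copies the topmost computed carry into $\ket{z}$ to produce $z \oplus (a+b)_n$. It then applies $n$ UMA blocks in reverse order to restore the $a$-register and to finalize the sum bits in the $b$-register. Correctness is a routine induction on $i$: after the $i$-th MAJ, the qubit at position $i$ of the $a$-register holds $c_{i+1}$; during the reverse UMA pass, these carries are cleaned up exactly as the sum bits $s_i$ are written into the $b$-register. The borrowed qubit $\ket{z}$ picks up only the overall carry-out, as required.

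For the gate counts, each of the $2n$ blocks contributes $O(1)$ CNOTs and single-qubit gates (after Toffoli decomposition), and the single extra CNOT to $\ket{z}$ contributes $O(1)$ more, giving total size $O(n)$. For depth, bit $i{+}1$ cannot be processed until the carry from bit $i$ is computed, so the MAJ and UMA chains are inherently sequential; however, each block has depth $O(1)$, yielding overall depth $O(n)$. I expect the main obstacle to be the in-place bookkeeping: one has to verify carefully, qubit by qubit and stage by stage, that the MAJ pass stores carries without destroying information needed by the UMA pass, and that the UMA pass returns the $a$-register exactly to its input while producing the correct sum in the $b$-register. This is precisely the invariant that distinguishes Cuccaro-style adders and that makes the ancilla-free, linear-size, linear-depth construction possible.
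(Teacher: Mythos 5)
The paper gives no proof of this lemma at all---it is imported as a black box from \cite{remaud2024optimizing}---so the comparison here is between your argument and the standard literature construction. Your MAJ/UMA ripple-carry design \emph{is} that standard (Cuccaro-style) construction, and the substance of your analysis is sound: $O(1)$ gates per bit position after Toffoli decomposition gives size $O(n)$; the inherently sequential carry chain with constant-depth blocks gives depth $O(n)$; carries are stored in place on the $a$-register during the forward pass, a single CNOT deposits the carry-out onto $\ket{z}$ as $z \oplus (a+b)_n$, and the reverse UMA pass restores $\ket{a}$ while writing the sum bits into the $b$-register.

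One claim, however, needs repair: your assertion that no qubits are needed beyond $\ket{a}$, $\ket{b}$, and $\ket{z}$. The Cuccaro MAJ block acts on a triple $(c_i, b_i, a_i)$, so at position $0$ it requires a physical qubit holding the incoming carry $c_0 = 0$, and this cannot be absorbed in place by special-casing bit $0$: a reversible map on the two qubits $(a_0, b_0)$ cannot put the carry $a_0 \wedge b_0$ alone on one qubit while keeping $(a_0, b_0)$ recoverable, since $a_0 \wedge b_0 = 0$ has three preimages and a single remaining bit cannot distinguish them. So either grant yourself one clean ancilla, which the circuit returns to $\ket{0}$ and which is harmless both asymptotically and in this paper's setting (each QPU already carries two ancillary qubits), or invoke a genuinely ancilla-free linear-size ripple-carry adder in the style of Takahashi and Kunihiro, whose scheme reorganizes the computation to avoid an explicit carry-in qubit---closer in spirit to the optimized construction in the cited reference. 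With that adjustment your induction goes through and the proof is complete.
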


Lemma \ref{toffolisharingcontrol} characterizes the cost of a particular circuit structure composed of Toffoli gates, leveraging the notion of quantum fan-out gates as defined in Definition \ref{qfanout}. The cost of implementing quantum fan-out gates is established in Lemma \ref{fanoutCNOT}.

\begin{definition}[Quantum Fan-out Gate] \label{qfanout}
    For $n \ge 1$, the \emph{quantum fan-out gate} $F_n$ is the $(n+1)$-qubit unitary operator defined by
    \begin{equation*}
        F_n \ket{a}\ket{b_1}\cdots\ket{b_n} = \ket{a}\ket{a \oplus b_1}\cdots\ket{a \oplus b_n}.
    \end{equation*}
\end{definition}

\begin{lemma}[\cite{gokhale2021quantum}] \label{toffolisharingcontrol}
    Suppose there are $n$ Toffoli gates which share a common control qubit, and all other qubits involved are pairwise disjoint. Then the circuit can be implemented using fan-out gates, CNOT gates, and single-qubit gates with size $O(n)$ and depth $O(1)$.
\end{lemma}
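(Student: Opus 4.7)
The plan is to convert the shared-control structure into one in which all Toffoli gates act on pairwise disjoint qubit triples, at which point they become trivially parallelizable. The key idea is to use a quantum fan-out gate to produce $n$ private ``copies'' of the common control qubit on fresh ancilla wires, so that each Toffoli can use its own copy; a second fan-out then uncomputes the ancillas.

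Concretely, I would introduce $n$ fresh ancilla qubits $e_1,\ldots,e_n$ initialized to $\ket{0}$ and apply $F_n$ with the common control $c$ as source and the ancillas as targets. By Definition~\ref{qfanout}, this produces $\ket{c}\ket{c}\cdots\ket{c}$ on the register $(c,e_1,\ldots,e_n)$ (tensored with, or entangled with, the remaining data qubits, by linearity). Next, for each $i$ I rewrite the $i$-th Toffoli so that its control is $e_i$ rather than $c$; since the pairs $(a_i,b_i)$ are pairwise disjoint and disjoint from the ancillas, the resulting $n$ Toffoli gates act on $n$ fully disjoint qubit triples, so they commute and can be executed in a single layer, each one expanded via the standard constant-size CNOT/single-qubit Toffoli decomposition. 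Finally, because Toffolis leave their control qubits unchanged, each $e_i$ still holds $\ket{c}$ after the middle layer; applying $F_n$ a second time and invoking $F_n^2=I$ returns every $e_i$ to $\ket{0}$, cleanly uncomputing the ancillas and realizing precisely the intended product of $n$ Toffolis on the original data qubits.

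For the complexity, the circuit is a sequence $F_n$--(parallel Toffolis)--$F_n$ where each Toffoli has $O(1)$ size and $O(1)$ depth over $\{\mathrm{CNOT},\text{single-qubit}\}$. The $n$ Toffolis run in one layer on disjoint qubits, contributing $O(n)$ size and $O(1)$ depth, while the two fan-out gates are counted directly as primitives and contribute $O(1)$ depth each. Summing yields total size $O(n)$ and total depth $O(1)$, matching the claim.

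I do not expect any serious technical obstacle. Correctness follows immediately from the fan-out identity and the invariance of Toffoli controls; the only subtle point worth highlighting in a formal write-up is the linearity argument that validates the fan-out construction on arbitrary input superpositions rather than just basis states, ensuring the ancillas are returned to $\ket{0}$ independently of the input to the common control.
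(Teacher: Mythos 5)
Your construction is a correct proof of the lemma \emph{as literally stated}: fan the control out onto $n$ clean ancillas, run the $n$ Toffolis in a single layer on now-disjoint triples (each expanded into a constant-size CNOT/single-qubit circuit), and uncompute with a second $F_n$ using $F_n^2 = I$. The linearity point you flag is handled correctly, since the Toffolis fix the basis values of all control wires, so the ancillas return to $\ket{0}$ on every branch of the superposition. Note, however, that the paper contains no proof of this lemma at all --- it imports it from \cite{gokhale2021quantum} --- and the construction there takes a different route that is ancilla-free: in the standard decomposition of a Toffoli into $H$, $T^{\pm 1}$ and CNOT gates, the shared control $c$ participates only as the control of a constant number of CNOTs plus one single-qubit phase; across the $n$ Toffolis, the $n$ parallel gates $\mathrm{CNOT}(c,\cdot)$ at each such stage merge into a single fan-out gate $F_n$, and the $n$ phases on $c$ merge into one single-qubit gate. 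That yields $O(1)$ fan-out gates, $O(n)$ local gates, depth $O(1)$, and \emph{zero} ancillary qubits.

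This difference is not cosmetic in the present paper. Lemma \ref{toffolisharingcontrol} is invoked in Phase 2 precisely because the binary-to-one-hot conversion must run with only two ancillary qubits per QPU (Theorem \ref{2QPUthm} allots $n/2+2$ qubits). At the top layer $l = \lceil\log(k+1)\rceil - 1$ there are $2^l = \Theta(k)$ Toffolis sharing the control $\ket{j_l}$, so your version would require $\Theta(k)$ fresh clean qubits; when $k$ is close to $n/2$, the one-hot register and its doubling target already occupy essentially the entire QPU, and those ancillas do not exist. So while your argument establishes the statement as written (which never forbids ancillas), it could not be substituted into the paper's qubit accounting; the ancilla-free merging of the control's CNOTs into fan-outs is the version the application actually relies on, and a formal write-up should either prove that version or state the ancilla cost of yours explicitly.
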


\begin{lemma}[\cite{zi2025shallow}] \label{fanoutCNOT}
    The quantum fan-out gate $F_n$ can be implemented using $n$ CNOT gates with circuit depth $O(\log n)$ and without ancillary qubits.
\end{lemma}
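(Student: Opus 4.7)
The plan is to implement $F_n$ using a binary-tree CNOT pattern that exploits the shared-control structure of the fan-out gate. The central observation is that once a qubit has received the value $a$ via a CNOT from the control, it can itself serve as a ``secondary'' source for fanning out to other qubits in parallel, so that the number of qubits holding the value $a$ doubles at each time step rather than growing by one; this is what turns the naively depth-$n$ sequential layer of CNOTs sharing control $a$ into a tree of depth $O(\log n)$.

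In the clean setting in which the $n$ target qubits begin in $\ket{0}$---the regime in which fan-out is invoked by Lemma \ref{toffolisharingcontrol}---the construction is direct. In round $1$, apply $\mathrm{CNOT}$ from $a$ to $b_1$, so that $b_1$ now equals $a$. In round $t\ge 2$, each of the $2^{t-1}$ qubits currently holding the value $a$ applies, in parallel, a $\mathrm{CNOT}$ to a fresh zero-initialized target. After $\lceil \log_2 (n+1)\rceil$ rounds every target holds $a$, and exactly $n$ CNOTs have been used in $O(\log n)$ depth with no ancillas. For arbitrary input states on the targets, I would combine this with the Hadamard duality $H^{\otimes(n+1)}\, F_n\, H^{\otimes(n+1)} = P_n$, where $P_n$ is the parity gate $\ket{a}\ket{b_1}\cdots\ket{b_n}\mapsto\ket{a\oplus b_1\oplus\cdots\oplus b_n}\ket{b_1}\cdots\ket{b_n}$; this identity follows from $(H\otimes H)\,\mathrm{CNOT}_{c,t}\,(H\otimes H) = \mathrm{CNOT}_{t,c}$ applied to each of the $n$ CNOTs comprising $F_n$. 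Implementing $P_n$ by a symmetric reversed tree of $n$ CNOTs in $O(\log n)$ depth, then conjugating back by single-qubit Hadamards (which do not contribute to the CNOT count), recovers $F_n$ with the same resource profile.

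The hard part is this last step. A naive tree applied to general inputs of $F_n$ (or equivalently, a naive tree for $P_n$) leaves spurious XOR terms on qubits other than the designated target, because every time a previously modified qubit is reused as a source it injects additional summands that are not part of the intended linear transformation on $\mathrm{GF}(2)^{n+1}$. Arranging the CNOT schedule so that all such spurious summands cancel---without resorting to uncomputation (which would double the CNOT count and push the total above $n$) or to ancillary qubits---requires a careful algebraic argument identifying a factorization of the fan-out matrix $I+e_0 e_0^{\top}\!\!\upharpoonright_{\text{rows }1..n}$ as a product of $n$ elementary row-addition matrices that, moreover, admits a parallel schedule of logarithmic depth. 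The explicit scheduling is the content of \cite{zi2025shallow}, which I would cite.
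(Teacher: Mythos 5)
The paper offers no proof of Lemma \ref{fanoutCNOT} at all---it is imported verbatim from \cite{zi2025shallow}---so the only part of your proposal that ``matches'' the paper is your final sentence deferring to that citation. Within what you do prove, two pieces are correct: the doubling tree for zero-initialized targets (after round $t$ there are $2^t$ qubits carrying the control value, giving $n$ CNOTs in depth $\lceil\log_2(n+1)\rceil$ with no ancillas), and the Hadamard duality $H^{\otimes(n+1)}F_nH^{\otimes(n+1)}=P_n$, which follows from $(H\otimes H)\,\mathrm{CNOT}_{c,t}\,(H\otimes H)=\mathrm{CNOT}_{t,c}$. You are also right that the zero-input case alone would not suffice for this paper: e.g., the fan-in that erases $\ket{j_l}$ in Phase~2 acts on registers in nontrivial states that must be preserved, so the general-input version of the lemma is genuinely needed.

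The genuine gap is that the route you sketch for the general case---``a factorization of the fan-out matrix as a product of $n$ elementary row-addition matrices that admits a parallel schedule of logarithmic depth''---is provably impossible, so no scheduling cleverness can close it. In a CNOT-only circuit every qubit carries, at every time step, a $\mathrm{GF}(2)$-linear form in the inputs. Since the row of each target $b_i$ changes from $e_i$ to $e_i+e_0$, each $b_i$ must be the target of at least one CNOT; with a budget of exactly $n$ CNOTs, each $b_i$ is targeted exactly once and the hub $a$ is never targeted. The final content of $b_i$ is then $b_i\oplus(\text{content of its control at that instant})$, which must equal $b_i\oplus a$, so that control must carry the pure form $a$ at that instant. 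But the only qubit that ever carries the pure form $a$ is the hub itself: every $b_j$ carries a form containing the monomial $b_j$ both before and after its single targeting. Hence all $n$ CNOTs are controlled on $a$, pairwise share a qubit, and force depth exactly $n$, not $O(\log n)$; the same argument (conjugated by your duality) kills a pure-CNOT transvection schedule for $P_n$. Consequently, whatever \cite{zi2025shallow} actually does, it must use single-qubit gates in an essential way---passing through intermediate states on which the circuit is not a $\mathrm{GF}(2)$-linear map, with only the \emph{CNOT count} being $n$---rather than the elementary-matrix factorization you describe. Your proposal is honest in flagging the hard step and outsourcing it to the citation, but the roadmap it gives for that step should be deleted, as it describes a dead end.
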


\section{$2$-QPU Dicke State Preparation} \label{sec:2QPU}

In this section, we present a circuit for preparing the $n$-qubit Dicke state $D(n,k)$ across two QPUs, each containing $(n/2 + 2)$ qubits. Throughout this section, we assume for simplicity that $n$ is even.

\begin{theorem} \label{2QPUthm}
    For two QPUs, each equipped with $(n/2 + 2)$ qubits, there exists a quantum circuit that prepares the Dicke state $D(n,k)$ with circuit size $O(nk)$ and circuit depth $O(k + \log k \log(n/k))$. Moreover, the number of communication gates required between the two QPUs is $\lceil \log (k+1) \rceil$.
\end{theorem}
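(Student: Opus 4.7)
The plan is to exploit the Schmidt decomposition of $D(n,k)$ across the balanced bipartition of the $n$ qubits into a left half on QPU~A and a right half on QPU~B. A direct calculation using Vandermonde's identity gives
\begin{equation*}
D(n,k) = \sum_{j=0}^{k} \alpha_j \, D(n/2, j)_A \otimes D(n/2, k-j)_B, \qquad \alpha_j := \sqrt{\binom{n/2}{j}\binom{n/2}{k-j}/\binom{n}{k}}.
\end{equation*}
Since distinct Dicke states on $n/2$ qubits are mutually orthogonal, this is a genuine Schmidt decomposition of rank $k+1$. The lower bound developed in Section~\ref{sec:lower} forces at least $\lceil\log(k+1)\rceil$ non-local CNOTs, so the goal is to meet this bound exactly.

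The construction I propose adapts the Ambainis-style teleportation protocol. First, on QPU~A, I prepare the counter state $\sum_{j=0}^{k}\alpha_j\ket{j}$ in binary on $\lceil\log(k+1)\rceil$ of the data qubits using Lemma~\ref{QSP}, at local cost $O(k)$ gates and $O(\log k)$ depth. Next, I perform $\lceil\log(k+1)\rceil$ non-local CNOTs, one per counter qubit, copying $\ket{j}$ into corresponding data qubits of QPU~B to obtain $\sum_j \alpha_j\ket{j}_A\ket{j}_B$. Locally on each QPU, I then convert the binary counter into the unary encoding---$\ket{j}\mapsto\ket{0^{n/2-j}1^j}$ on A and $\ket{j}\mapsto\ket{0^{n/2-k+j}1^{k-j}}$ on B---and, since these unary strings uniquely encode $j$, uncompute the counter qubits back to $\ket{0}$ by running the conversion in reverse off of the unary register. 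Finally, I apply the local Dicke unitary $U^{n/2}_k$ on each QPU via Lemma~\ref{non-disDicke25} to map $\ket{0^{n/2-j}1^j}$ to $D(n/2,j)$, yielding the target state.

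The accounting delivers the claimed bounds: the two local Dicke unitaries contribute $O(nk)$ size and $O(k + \log k \log(n/k))$ depth, which dominate; the coefficient preparation, the binary-to-unary conversion, and its inverse are lower order in both metrics; and the communication cost is exactly $\lceil\log(k+1)\rceil$ by construction.

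The main obstacle I anticipate is implementing the binary-to-unary conversion (and its inverse) within the strict budget of only two ancillary qubits per QPU: a naive realization via Lemma~\ref{ctrlQSP} would require $\Omega(\log k)$ fresh workspace qubits. The planned workaround is to carry out the conversion directly on the data register itself, using the ripple-carry adder of Lemma~\ref{adder} to perform controlled increments (which needs only a single carry qubit beyond its input registers) together with the shared-control Toffoli gadget of Lemma~\ref{toffolisharingcontrol}, compiled via the fan-out construction of Lemma~\ref{fanoutCNOT}, to broadcast each counter bit in depth $O(\log n)$. Verifying that these pieces compose into a subcircuit of the advertised size and depth using only $O(1)$ extra workspace qubits is the most delicate portion of the argument.
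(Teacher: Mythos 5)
Your skeleton coincides with the paper's construction: the same Vandermonde/Schmidt decomposition over the weight split $(j,\,k-j)$, the same Phase~1 (prepare $\sum_j \alpha_j\ket{j}$ in binary via Lemma~\ref{QSP}, copy it with exactly $\lceil\log(k+1)\rceil$ non-local CNOTs, map $\ket{j}\mapsto\ket{k-j}$ on the second QPU with the adder of Lemma~\ref{adder}), and the same endgame via the local Dicke unitaries of Lemma~\ref{non-disDicke25}. However, the step you yourself flag as ``the most delicate portion''---converting the binary counter to unary within the strict ancilla budget---is precisely the technical heart of the theorem, and your sketched workaround does not close it, for two concrete reasons. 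First, the ripple-carry adder of Lemma~\ref{adder} operates on \emph{binary} registers; a ``controlled increment'' in the unary encoding must flip a block of qubits whose positions depend on the running partial value $j_{l-1}\cdots j_0$, not on fixed targets, so broadcasting the counter bit $j_l$ via fan-out or shared-control Toffolis (Lemmas~\ref{toffolisharingcontrol} and~\ref{fanoutCNOT}) to predetermined positions does not implement $\ket{0^{m-j}1^{j}}\mapsto\ket{0^{m-j-2^l}1^{j+2^l}}$. Second, your plan to build the entire unary register first and only afterwards uncompute the counter ``off of the unary register'' forces the counter and the unary string to coexist, i.e.\ $k+\lceil\log(k+1)\rceil$ qubits simultaneously; at the extreme $k=n/2$ of the WLOG range this exceeds the available $n/2+2$ qubits (already for $k\ge 4$), so the qubit budget is violated exactly where the theorem is tightest. (The uncomputation itself would in fact be cheap, since $j_l=\bigoplus_{m\ge 1}u_{m2^l}$ is a linear function of the unary bits; the failure is in the forward conversion and the register footprint, not in the erasure formula.)

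The paper's resolution, absent from your proposal, is an intermediate \emph{one-hot} encoding processed bit by bit from $j_0$ upward. In one-hot, adding $j_l 2^l$ is a permutation of qubit positions, realized by $2^l$ controlled-swaps sharing the control $j_l$, with size $O(2^l)$ and depth $O(l)$ via Lemmas~\ref{toffolisharingcontrol} and~\ref{fanoutCNOT}; crucially, whether the swap occurred reveals $j_l$, so $j_l$ is erased \emph{immediately} by a layer of CNOTs from the one-hot register, and the freed counter qubit is recycled into the growing one-hot register for the next iteration. This interleaved erase-and-reuse is what keeps the footprint at $k+1$ one-hot qubits plus two ancillas, with total conversion cost $O(k)$ size and $O(\log^2 k)$ depth; a final cascade of $k$ CNOTs and one $X$ gate then maps one-hot to unary before $U^{n/2}_k$ is applied. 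A direct binary-to-unary variant could be salvaged by the same erase-as-you-go principle (after the step for bit $l$, the unary qubit at position $2^l$ itself equals $j_l$, so a single CNOT erases it, and the update is a controlled block-swap plus a fan-out of flips), but nothing of this kind appears in your sketch. As written, your argument establishes the decomposition and the communication count, but not the claimed size, depth, and per-QPU qubit bounds.
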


Remarkably, the circuit depth and size costs in Theorem \ref{2QPUthm} are directly inherited from Lemma \ref{non-disDicke25}. In other words, up to constant factors, our distributed preparation protocol introduces no additional overhead in circuit depth or size. The remainder of this section is devoted to a detailed description of the construction.

Intuitively, the communication complexity quantifies the amount of information that must be shared between the QPUs. For the Dicke state, without loss of generality, assume that $k \le n/2$. Our construction relies on the following decomposition:
\begin{equation*}
    \sqrt{\binom{n}{k}} D(n, k) = \sum_{j = 0}^k \sqrt{\binom{n/2}{j}} D(n/2, j) \sqrt{\binom{n/2}{k - j}} D(n/2, k - j).
\end{equation*}
Specifically, the QPUs coherently share a superposition over the index $j$. As a result, the distributed state preparation problem is reduced to the corresponding non-distributed preparation of local Dicke states.

To reduce the communication cost, the index $j$ should be encoded in binary using $\lceil\log (k + 1)\rceil$ qubits. Recall from Definition \ref{Dickeunitary} that the Dicke unitary expects the index $j$ in unary encoding. Therefore, the main challenge is to efficiently implement a transformation between these encodings. To this end, we introduce one-hot encoding as an intermediate representation. The formal definitions of the binary, one-hot, and unary encodings are given below.

\begin{definition} \label{encoding}
    Let $n \in \mathbb{N}$ and let $j \in [n + 1]$. There are three standard ways to encode the integer $j$ into a quantum state:
    \begin{itemize}
        \item \emph{Binary encoding}:
        \begin{equation*}
            \ket{j} = \ket{j_{\lceil \log_2 (n+1) \rceil - 1}\dots j_1 j_0},
        \end{equation*}
        where
        \begin{equation*}
            j = \sum_{l=0}^{\lceil \log_2 (n+1) \rceil - 1} j_l 2^l,
            \qquad j_l \in \{0,1\}.
        \end{equation*}

        \item \emph{One-hot encoding}:
        \begin{equation*}
            \ket{e^{n+1}_j} = \ket{0^{n-j}10^{j}},
        \end{equation*}
        which is an $(n+1)$-qubit state with a single excitation at position $j$.

        \item \emph{Unary encoding}:
        \begin{equation*}
            \ket{u^n_j} = \ket{0^{n-j}1^{j}},
        \end{equation*}
        consisting of $j$ consecutive ones followed by zeros.
    \end{itemize}
\end{definition}

Overall, our construction proceeds in three phases. In the first phase, $\QPU_0$ prepares a superposition of Hamming weights encoded in binary and transfers this state to $\QPU_1$. In the second phase, we convert the binary encoding into a one-hot encoding; this is the key step of our construction. Using only two ancillary qubits, we implement this transformation with polynomial size and depth. In the final phase, we transform the one-hot encoding into a unary encoding and invoke Lemma \ref{non-disDicke25} to complete the state preparation.

\textbf{Phase 1: Hamming-weight Distribution.}

\begin{figure*}[hbtp]
    \centering
    \begin{tikzpicture}
        \begin{yquant}
            qubit {} a[2];
            init {$\QPU_0$} (a);
            discard a;
            init {$\ket{0}$} a;
            qubit {} b[2];
            init {$\QPU_1$} (b);
            discard b;
            init {$\ket{0}$} b;
            
            ["north: $\lceil\log(k + 1)\rceil$" {font=\protect\footnotesize, inner sep=0pt}]
            slash a[0];
            ["north: others" {font=\protect\footnotesize, inner sep=0pt}]
            slash a[1];
            ["north: $\lceil\log(k + 1)\rceil$" {font=\protect\footnotesize, inner sep=0pt}]
            slash b[0];
            ["north: others" {font=\protect\footnotesize, inner sep=0pt}]
            slash b[1];
            align a;
            align b;
            box {$\ket{0}\mapsto \sum_j \alpha_j \ket{j}$} a[0];
            cnot b[0] | a[0];
            box {$\ket{j}\mapsto \ket{k-j}$} b[0];
            output {$\sum_j \alpha_j \ket{j}$} a[0];
            output {$\ket{0}$} a[1];
            output {$\sum_j \alpha_j \ket{k-j}$} b[0];
            output {$\ket{0}$} b[1];
        \end{yquant}
    \end{tikzpicture}
    \caption{\textbf{Phase 1}. Using the quantum state preparation circuit for a general state, $\QPU_0$ distributes the amplitudes according to $D(n/2, j)$ and $D(n/2, k - j)$, which are prepared on $\QPU_0$ and $\QPU_1$, respectively. Subsequently, $\QPU_0$ ``sends'' the $\ket{j}$ register to $\QPU_1$, thereby assigning the corresponding task that $\QPU_1$ must complete. After the minus circuit, the roles of $\QPU_0$ and $\QPU_1$ in the subsequent phases become symmetric; specifically, each performs the mapping $\ket{j} \mapsto D(n/2, j)$ using ancillary qubits.}
    \label{fig:phase1}
\end{figure*}
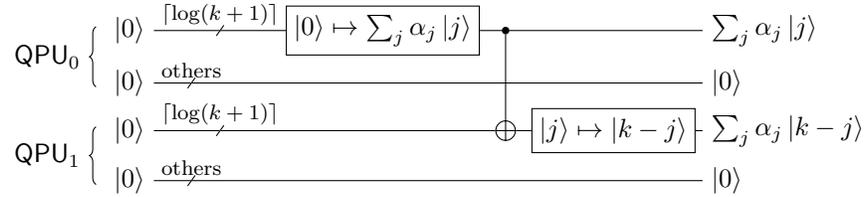

Figure \ref{fig:phase1} illustrates the three steps used to distribute the Hamming weights across the QPUs. Specifically, each computational basis state $\ket{j}$ is to be mapped to $D(n/2, j)$ in the subsequent phases. After the completion of this phase, no further communication between QPUs is required, and all operations within each QPU are performed symmetrically and in parallel. The following describes these steps in detail.
\begin{itemize}
    \item $\QPU_0$ prepares the state
    \begin{equation*}
        \sum_{j = 0}^k \alpha_j\ket{j} = 
        \frac{1}{\sqrt{\binom{n}{k}}} \sum_{j = 0}^k 
        \sqrt{\binom{n/2}{j}} \sqrt{\binom{n/2}{k - j}} \ket{j}
    \end{equation*}
    on $\lceil \log(k+1) \rceil$ qubits. The amplitudes correspond to the fraction of $(j, k-j)$ pairs distributed across the two QPUs. This step is implemented following Lemma \ref{QSP}. By utilizing all $(n/2 + 2)$ available qubits, the resulting circuit has size $O(2^{\log k}) = O(k)$ and depth $O(\log k + (2k/n)) = O(\log k)$.
    
    \item $\QPU_0$ copies the $\lceil \log(k+1) \rceil$ qubits to $\QPU_1$ using a single layer of $\lceil \log(k+1) \rceil$ CNOT gates. This step has circuit size $\lceil \log(k+1) \rceil$ and depth $1$.
    
    \item $\QPU_1$ performs the mapping $\ket{j} \mapsto \ket{k-j}$ on the $\lceil \log(k+1) \rceil$ qubits using the adder circuit described in Lemma \ref{adder}. This circuit has size and depth $O(\log k)$.
\end{itemize}

In summary, Phase 1 requires circuit size
\begin{equation} \label{phase1size}
    O(k) + \lceil \log(k+1) \rceil + O(\log k) = O(k).
\end{equation}
and depth
\begin{equation} \label{phase1depth}
    O(\log k) + 1 + O(\log k) = O(\log k).
\end{equation}
The communication complexity incurred in this phase is $\lceil \log(k+1) \rceil$.

\textbf{Phase 2: Binary Encoding to One-hot Encoding.}

From this phase, we focus on a specific QPU, noting that the others operate similarly. In general, we recursively convert the binary encoding into the one-hot encoding, proceeding from the lower bits to the higher bits.

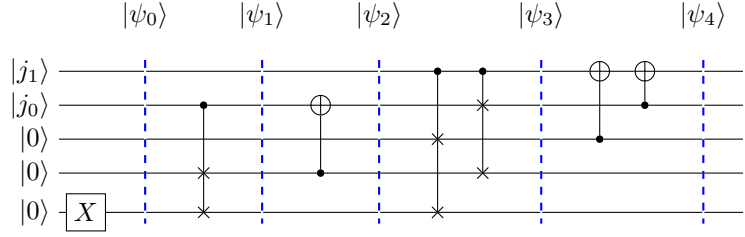
\begin{figure*}
    \centering
    \begin{tikzpicture}[label distance=4mm]
    \begin{yquant}[operators/every barrier/.append style={blue, thick}]
        qubit {} q[5];
        init {$\ket{j_1}$} q[0];
        init {$\ket{j_0}$} q[1];
        init {$\ket{0}$} q[2, 3, 4];

        x q[4];
        ["$\ket{\psi_0}$"]
        barrier (-);
        swap (q[3, 4]) | q[1];
        ["$\ket{\psi_1}$"]
        barrier (-);
        cnot q[1] | q[3];
        ["$\ket{\psi_2}$"]
        barrier (-);
        swap (q[2, 4]) | q[0];
        swap (q[1, 3]) | q[0];
        ["$\ket{\psi_3}$"]
        barrier (-);
        cnot q[0] | q[2];
        cnot q[0] | q[1];
        ["$\ket{\psi_4}$"]
        barrier (-);
    \end{yquant}
\end{tikzpicture}
    \caption{An example of \textbf{Phase~2} for $k = 3$. The key observation is that, under the one-hot encoding, the operation $+2^l$, and hence $+ j_l \times 2^l$, can be implemented straightforwardly using swap and controlled-swap gates, respectively. Moreover, the state of the control qubit $\ket{j_l}$ can be inferred by detecting the presence of a swap operation, which reduces the number of ancillary qubits required from $O(\log k)$ to $O(1)$.}
    \label{fig:phase2example}
\end{figure*}

Take $k = 3$ as an illustrative example. The corresponding circuit is shown in Figure~\ref{fig:phase2example}, where qubits unaffected by the operations are omitted for clarity. Let
\begin{equation*}
    j = j_1 \times 2^1 + j_0 \times 2^0, \qquad j_0, j_1 \in [2].
\end{equation*}
Our goal is to implement the transformation
\begin{equation*}
    \ket{j_1 j_0 0 0 0} \mapsto \ket{0 e^4_j}.
\end{equation*}

Recalling Definition~\ref{encoding}, we begin with
\begin{equation*}
    \ket{\psi_0} = \ket{j_1 j_0 0 0 1} = \ket{j_1 j_0 0 0 e^1_0}.
\end{equation*}
The controlled-swap gate effectively performs an addition $+ j_0 \times 2^0$ in the one-hot encoding, resulting in
\begin{equation*}
    \ket{\psi_1} = \ket{j_1 j_0 0 e^2_{j_0}}.
\end{equation*}
The occurrence of the swap indicates that $j_0 = 1$. Consequently, a CNOT gate can be applied to reset the qubit $\ket{j_0}$ to $\ket{0}$, yielding
\begin{equation*}
    \ket{\psi_2} = \ket{j_1 0 0 e^2_{j_0}}.
\end{equation*}
Thus, the qubit originally storing $\ket{j_0}$ is freed and can be reused as part of the one-hot encoding register.

Next, we apply two controlled-swap gates to realize the addition $+\, j_1 \times 2^1$, and then eliminate $\ket{j_1}$ in an analogous manner. For simplicity, we summarize the results of these steps as
\begin{align*}
    \ket{\psi_3} &= \ket{j_1 e^4_{j_1 j_0}}, \\
    \ket{\psi_4} &= \ket{0 e^4_{j_1 j_0}}.
\end{align*}

\begin{figure}[hbtp]
    \centering
    \begin{tikzpicture}
        \begin{yquant}
            qubit {$\ket{j_{\lceil\log(k + 1)\rceil - 1}}$} c;
            qubit {$\vdots$} c[+1]; discard c[1];
            qubit {$\ket{j_l}$} c[+1];
            
            qubit {$\ket{0}$} t;
            qubit {$\ket{0}$} t[+1];
            qubit {$\ket{e^{2^{l}}_{j_{l-1}\dots j_0}}$} t[+1];
    
            ["north: others" {font=\protect\footnotesize, inner sep=0pt}]
            slash t[0];
            ["north: $2^{l}$" {font=\protect\footnotesize, inner sep=0pt}]
            slash t[1];
            ["north: $2^{l}$" {font=\protect\footnotesize, inner sep=0pt}]
            slash t[2];
            ["south: $x\leftrightarrow x + 2^l$" {font=\protect\footnotesize, inner sep=0pt}]
            swap (t[1, 2]) | c[2];
            cnot c[2] | t[1];
            
            output {$\ket{j_{\lceil\log(k + 1)\rceil - 1}}$} c[0];
            output {$\ket{0}$} c[2];
            output {$\ket{0}$} t[0];
            output {$\ket{e^{2^{l + 1}}_{j_l\dots j_0}}$} (t[1, 2]);
        \end{yquant}
    \end{tikzpicture}
    \caption{The subcircuit handling $\ket{j_l}$ in \textbf{Phase 2}. After each such subcircuit, the number of qubits used for the one-hot encoding doubles, while the number of qubits used for the binary encoding decreases by one. The qubit freed from the binary encoding is repurposed for the one-hot encoding, except for the most significant qubit, $\ket{j_{\lceil \log (k+1)\rceil - 1}}$. In addition to the qubit representing $0$ in the one-hot encoding, a total of two ancillary qubits are required.}
    \label{fig:phase2}
\end{figure}
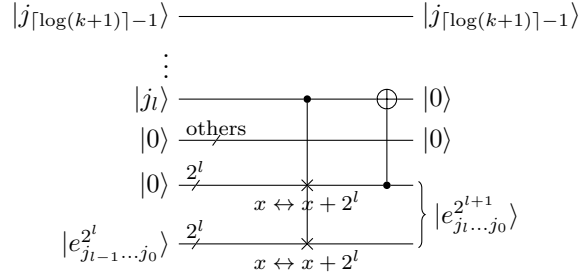

Returning to the general case, Figure \ref{fig:phase2} illustrates the circuit segment that processes the qubit $\ket{j_l}$ for some $l \in [\lceil \log (k+1) \rceil]$. The $2^l$ controlled-swap gates collectively implement the operation $+ j_l \cdot 2^l$ on the register $\ket{e^{2^l}_{j_{l-1}\dots j_1 j_0}}$. Subsequently, $2^l$ CNOT gates are applied to eliminate the qubit $\ket{j_l}$, as analyzed in the preceding example. Iterating this procedure from $\ket{j_0}$ through $\ket{j_{\lceil \log (k+1)\rceil - 1}}$ completes this phase.

Each controlled-swap gate is decomposed into three Toffoli gates, and Lemmas \ref{toffolisharingcontrol} and~\ref{fanoutCNOT} are used to optimize their implementation. As a result, the controlled-swap layer can be realized with circuit size $O(2^l)$ and depth $O(l)$. The subsequent CNOT layer corresponds to a fan-in operation; hence, Lemma \ref{fanoutCNOT} applies directly, yielding circuit size $O(2^l)$ and depth $O(l)$. Summing over all $l \in [\lceil \log (k+1) \rceil]$, we obtain a total circuit size
\begin{equation} \label{phase2size}
    \sum_{l = 0}^{\lceil \log (k+1) \rceil} O(2^l) = O(k)
\end{equation}
and depth 
\begin{equation} \label{phase2depth}
    \sum_{l = 0}^{\lceil \log (k+1) \rceil} O(l) = O(\log^2 k)
\end{equation}
for this phase.

\textbf{Phase 3: Local Non-distributed Preparation.}

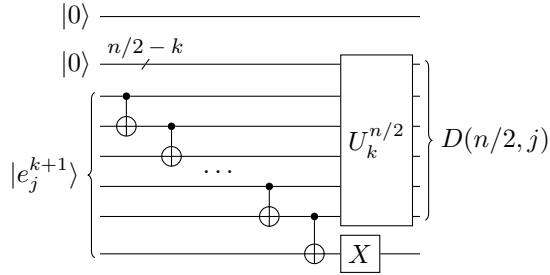
\begin{figure}[hbtp]
    \centering
    \begin{tikzpicture}
        \begin{yquant}
            qubit {$\ket{0}$} a;
            qubit {$\ket{0}$} r;
            [name=ypos]
            qubit {} q[6];
            
            init {$\ket{e^{k + 1}_j}$} (q);
            ["north: $n/2-k$" {font=\protect\footnotesize, inner sep=0pt}]
            slash r;
            cnot q[1] | q[0];
            [name=left]
            cnot q[2] | q[1];
            hspace {7mm} -;
            [name=right]
            cnot q[4] | q[3];
            cnot q[5] | q[4];
            x q[5];
            
            box {$U^{n/2}_k$} (r, q[0-4]);
            output {$D(n/2, j)$} (r, q[0-4]);
        \end{yquant}
        \path (left |- ypos-2) -- (right |- ypos-3) node[midway] {$\dots$};
    \end{tikzpicture}
    \caption{\textbf{Phase 3}. The downstairs CNOT gates transform the one-hot encoding into the unary encoding, introducing an additional $1$ due to the representation of $0$ in the one-hot encoding. The subsequent $X$ gate removes this extra $1$. After this correction, the Dicke unitary, implemented in a non-distributed manner, can be applied directly.}
    \label{fig:phase3}
\end{figure}

In the final phase, illustrated in Figure \ref{fig:phase3}, we first convert the one-hot encoding $\ket{e^{k+1}_j}$ into the unary encoding $\ket{u^k_j}$. This conversion is implemented using $k$ CNOT gates. Since Definition \ref{encoding} represents the one-hot encoding on $(k+1)$ qubits and the unary encoding on $k$ qubits, the extra qubit is removed by a final $X$ gate. For clarity, we explicitly describe the mapping as
\begin{equation*}
    \ket{e^{k+1}_j} \mapsto \ket{0^{k-j}1^{j+1}} \mapsto \ket{u^k_j 0}.
\end{equation*}
This transformation has circuit size $O(k)$ and depth $O(k)$.

Subsequently, we apply Lemma \ref{non-disDicke25} and use the Dicke unitary $U^n_k$ to prepare the state $D(n/2, j)$. The resulting circuit depth and size are directly inherited from the non-distributed Dicke state preparation.

The cost of this phase consists of the circuit size
\begin{equation} \label{phase3size}
    O(k) + O(nk) = O(nk).
\end{equation}
and depth
\begin{equation} \label{phase3depth}
    O(k) + O(k + \log k \log(n/k)) = O(k + \log k \log(n/k))
\end{equation}

Summing the contributions from equations \ref{phase1size}, \ref{phase2size}, and \ref{phase3size}, we obtain the total size of the circuit:
\begin{equation*}
    O(k) + O(k) + O(nk) = O(nk).
\end{equation*}
Similarly, summing the bounds from equations \ref{phase1depth}, \ref{phase2depth}, and \ref{phase3depth}, the total depth of the circuit is
\begin{equation*}
    O(\log k) + O(\log^2 k) + O(k + \log k \log(n/k)) = O(k + \log k \log(n/k)),
\end{equation*}
Communication occurs only in Phase 1, with complexity $\lceil \log (k+1) \rceil$.

In summary, the circuit depth and size incurred by distributing the Hamming weight and performing the encoding conversion do not exceed those of the non-distributed Dicke unitary, as established in Lemma \ref{non-disDicke25}. This completes the proof of Theorem \ref{2QPUthm}.

\section{General $p$-QPU Dicke State Preparation} \label{sec:pQPU}

In this section, we extend our circuit construction to the general case of $p$ QPUs. For simplicity, we assume throughout the remainder of the paper that $p$ divides $n$ evenly, i.e., $n=kp$ for some positive integer $k$. The results proved for the divisible case can be straightforwardly generalized to the non-divisible case by substituting $n/p$ with $\lceil n / p\rceil$ in Table \ref{tab:comparison}.

\begin{theorem} \label{pQPUthm}
    For $p$ QPUs, each equipped with $(n/p+2)$ qubits, there exists a quantum circuit that prepares the Dicke state $D(n,k)$ with circuit size $O(nk)$ and circuit depth $O(p^2 k + \log k \log(n/k))$. Moreover, the number of communication gates required between the QPUs is $O(p \log k)$.
\end{theorem}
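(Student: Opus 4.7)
The plan is to generalize the three-phase framework of Section \ref{sec:2QPU}: Phases~2 and~3 run verbatim on each of the $p$ QPUs in parallel, while Phase~1 is rebuilt to distribute the Hamming weight across all $p$ QPUs. The starting point is the multinomial decomposition
\begin{equation*}
    \sqrt{\binom{n}{k}}\, D(n,k) = \sum_{\substack{j_0 + \cdots + j_{p-1} = k \\ 0 \le j_i \le n/p}} \prod_{i=0}^{p-1} \sqrt{\binom{n/p}{j_i}}\, D(n/p, j_i),
\end{equation*}
so it suffices to first prepare the coherent binary-encoded superposition $\sum_{\vec j} \alpha_{\vec j}\, \ket{j_0}_{\QPU_0} \otimes \cdots \otimes \ket{j_{p-1}}_{\QPU_{p-1}}$, where $\alpha_{\vec j} = \sqrt{\prod_i \binom{n/p}{j_i}/\binom{n}{k}}$. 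Given this superposition, Phase~2 (binary $\to$ one-hot) and Phase~3 (one-hot $\to$ unary followed by $U^{n/p}_k$ via Lemma \ref{non-disDicke25}) run independently on each QPU and produce $D(n/p, j_i)$ on $\QPU_i$.

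I would realise Phase~1 as a sequential chain across the QPUs. $\QPU_0$ applies Lemma \ref{QSP} to prepare the marginal $\sum_{j_0} \sqrt{\binom{n/p}{j_0}\binom{(p-1)n/p}{k-j_0}/\binom{n}{k}}\, \ket{j_0}$ on $\lceil\log(k+1)\rceil$ qubits, uses the adder of Lemma \ref{adder} to write $r_1 = k - j_0$ onto a neighbouring register, and ships $r_1$ to $\QPU_1$ by a single layer of $\lceil\log(k+1)\rceil$ non-local CNOTs. Inductively, $\QPU_i$ invokes the controlled state preparation of Lemma \ref{ctrlQSP} to produce $\ket{r_i} \sum_{j_i} \sqrt{\binom{n/p}{j_i}\binom{(p-i-1)n/p}{r_i - j_i}/\binom{(p-i)n/p}{r_i}}\, \ket{j_i}$, updates $r_{i+1} = r_i - j_i$ with another adder, forwards $r_{i+1}$ to $\QPU_{i+1}$, and uncomputes the stale $r_i$ register in place using the identity $r_i = r_{i+1} + j_i$. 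The last QPU reads off $j_{p-1} = r_{p-1}$ directly, giving a total of $(p-1)\lceil\log(k+1)\rceil = O(p \log k)$ non-local CNOTs, matching the communication bound.

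For the local complexity, the $n/p$ data qubits on every QPU remain idle during Phase~1 and can be recruited as ancillaries in Lemma \ref{ctrlQSP}; each controlled preparation step then has size $O(k^2)$ and depth $O(k)$, so the sequential chain contributes size $O(pk^2) = O(nk)$ (using $n = pk$) and depth $O(p^2 k)$ once the sequential dependency and per-stage uncomputation passes are summed. Phases~2 and~3 run in parallel on all QPUs, inheriting the 2-QPU bounds of $O(nk)$ size and $O(k + \log k \log(n/k))$ depth, so the totals match Theorem \ref{pQPUthm}.

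The main obstacle is the fine-grained depth accounting for Phase~1: verifying that the sequential controlled preparations, adders, and their uncomputations aggregate to $O(p^2 k)$ rather than anything larger, under the tight $n/p + 2$ qubit budget per QPU. A secondary concern is the extremal regime $k > n/p$, where a single QPU cannot store the full unary encoding of $j_i$; this is handled by the symmetry $D(n,k) = X^{\otimes n} D(n, n-k)$ to reduce to $k \le n/2$ and then checking that the one-hot register on each QPU never exceeds $\min(k, n/p) + 1 \le n/p + 1$ qubits, still fitting inside the $n/p + 2$ qubit envelope.
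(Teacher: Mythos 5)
Your overall architecture is essentially the paper's: a sequential chain over a linear QPU topology in which each $\QPU_i$ performs a controlled state preparation (Lemma \ref{ctrlQSP}) conditioned on a running total, then forwards the updated total by a layer of $\lceil\log(k+1)\rceil$ non-local CNOTs, with Phases 2 and 3 of the 2-QPU construction running unchanged in parallel afterwards. Forwarding the remainder $r_i = k - (j_0 + \cdots + j_{i-1})$ instead of the paper's partial sum $J_{i-1}$ is an immaterial reparametrization (it even lets $\QPU_{p-1}$ read off $j_{p-1} = r_{p-1}$ directly, saving the paper's final subtraction gate $A$), and your closed-form hypergeometric amplitudes are just Vandermonde evaluations of the paper's $\alpha_{J_i}$ sums. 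Your treatment of $k > n/p$ via the cap $\min(k, n/p)$ mirrors the paper's $k'$, and your size/depth accounting fits within the theorem's bounds.

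The genuine gap is the uncomputation step and the resulting communication count. A single layer of non-local CNOTs \emph{copies} $r_{i+1}$ to $\QPU_{i+1}$; it does not move it. After your local ``in place'' cleanup via $r_i = r_{i+1} + j_i$, $\QPU_i$ still holds a register containing $r_{i+1}$ (whether you update in place or compute out of place, one register carrying $r_{i+1}$ survives). That value depends on $j_0, \dots, j_i$, not on $j_i$ alone, so it is entangled with the other QPUs and cannot be erased by any local operation: the protocol as you state it terminates in $\sum_{\vec{j}} \alpha_{\vec{j}} \bigotimes_i D(n/p, j_i) \otimes \ket{r_{i+1}}$, whose reduced state on the data qubits is a mixture, not $D(n,k)$. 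The paper handles exactly this with a second, backward communication pass (the subcircuits $W_i$ plus a final CNOT layer): for $i = p-2, \dots, 1$, the copy of the running total held by $\QPU_i$ erases $\QPU_{i+1}$'s copy via cross-QPU CNOTs, and a local subtraction restores the previous total, cascading down to $\QPU_0$. This roughly doubles the non-local CNOT count to $2(p-1)\lceil\log(k+1)\rceil$, so your stated total of exactly $(p-1)\lceil\log(k+1)\rceil$ is wrong, although the $O(p\log k)$ bound---and hence Theorem \ref{pQPUthm}---is unaffected once the backward pass is added. (Alternatively, an eager fix is to spend a second CNOT layer per link, controlled from $\QPU_{i+1}$'s fresh copy, to erase $\QPU_i$'s register immediately, i.e., to turn each copy into a move; this has the same constant-factor cost.)
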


\begin{proof}

The proof is structured into three parts. In Part 1, we compare the cases $p = 2$ and $p\ge 3$, noting that the $p\ge 3$ case differs from the $p = 2$ case only in Phase 1 of the $p = 2$ construction. By emphasizing these differences, we outline the core idea of our distributed circuit construction. In Part 2, we detail how to adapt Phase 1---the Hamming-weight-distributed phase---from the $p = 2$ case to the general $p\ge 3$ scenario. In Part 3, we analyze the total circuit cost (size, depth, and communication complexity) and thereby complete the proof.

\textbf{Part 1.}

Compared with the construction in Section \ref{sec:2QPU}, the overall construction differs only in Phase 1. Specifically, we must now coordinate the actions of the $p$ QPUs to prepare a coherent superposition of the form
$\ket{j_0}\ket{j_1}\cdots\ket{j_{p-1}}$
distributed across the $p$ QPUs. Once this distribution is achieved, the remaining phases proceed in close analogy to the 2-QPU case and admit a similar cost analysis.

Unlike the 2-QPU case, when $p\ge 3$ we can no longer assume that $k \le n/p$. We therefore define
\begin{equation*}
    k' = \min(n/p, k),
\end{equation*}
which represents the maximum Hamming weight which can be assigned to a single QPU. Let $J = (j_i)_{i\in[p]}$. The feasible region for the distribution of Hamming weight is then given by
\begin{equation*}
    F_p = \left\{ J : j_i \in [k' + 1]\ \forall i \in [p], \ \text{and } \sum_{i\in[p]} j_i = k \right\}.
\end{equation*}

Now we decompose the Dicke state $D(n,k)$ into $p$ components as
\begin{equation*}
    \sqrt{\binom{n}{k}} D(n,k) = \sum_{J\in F_p}
    \sqrt{\binom{n/p}{j_0}} D(n/p, j_0)
    \cdots
    \sqrt{\binom{n/p}{j_{p - 1}}} D(n/p, j_{p-1}).
\end{equation*}

\textbf{Part 2.}

In what follows, we detail the modified Hamming-weight distribution phase. We propagate communication sequentially across the QPUs---a design choice well suited to a linear QPU connection topology. A key observation underpins this approach: each QPU only requires knowledge of the total Hamming weight prepared by the preceding QPUs, rather than their individual contributions. This intentional design significantly reduces the communication complexity of the distributed circuit.

\begin{figure*}[hbtp]
\centering
\begin{tikzpicture}
    \begin{yquant}[register/separation=3mm, every nobit output/.style={}]
        qubit {} q1[2];
        init {$\QPU_{0}$} (q1);
        discard q1;
        init {$\ket{0}$} q1;
        qubit {} q2[3];
        init {$\QPU_{1}$} (q2);
        discard q2;
        init {$\ket{0}$} q2;
        qubit {$\vdots$} e1; discard e1;
        qubit {} qi[3];
        init {$\QPU_{i}$} (qi);
        discard qi;
        init {$\ket{0}$} qi;
        qubit {} qi1[3];
        init {$\QPU_{i + 1}$} (qi1);
        discard qi1;
        init {$\ket{0}$} qi1;
        qubit {$\vdots$} e2; discard e2;
        qubit {} p1[3];
        init {$\QPU_{p - 1}$} (p1);
        discard p1;
        init {$\ket{0}$} p1;
        
        ["north: $\lceil\log(k + 1)\rceil$" {font=\protect\footnotesize, inner sep=0pt}]
        slash q1[0];
        ["north: others" {font=\protect\footnotesize, inner sep=0pt}]
        slash q1[1];
        ["north: $\lceil\log(k + 1)\rceil$" {font=\protect\footnotesize, inner sep=0pt}]
        slash q2[0];
        ["north: $\lceil\log(k + 1)\rceil$" {font=\protect\footnotesize, inner sep=0pt}]
        slash q2[1];
        ["north: others" {font=\protect\footnotesize, inner sep=0pt}]
        slash q2[2];
        ["north: $\lceil\log(k + 1)\rceil$" {font=\protect\footnotesize, inner sep=0pt}]
        slash qi[0];
        ["north: $\lceil\log(k + 1)\rceil$" {font=\protect\footnotesize, inner sep=0pt}]
        slash qi[1];
        ["north: others" {font=\protect\footnotesize, inner sep=0pt}]
        slash qi[2];
        ["north: $\lceil\log(k + 1)\rceil$" {font=\protect\footnotesize, inner sep=0pt}]
        slash qi1[0];
        ["north: $\lceil\log(k + 1)\rceil$" {font=\protect\footnotesize, inner sep=0pt}]
        slash qi1[1];
        ["north: others" {font=\protect\footnotesize, inner sep=0pt}]
        slash qi1[2];
        ["north: $\lceil\log(k + 1)\rceil$" {font=\protect\footnotesize, inner sep=0pt}]
        slash p1[0];
        ["north: $\lceil\log(k + 1)\rceil$" {font=\protect\footnotesize, inner sep=0pt}]
        slash p1[1];
        ["north: others" {font=\protect\footnotesize, inner sep=0pt}]
        slash p1[2];
        align q1, q2, e1, qi, qi1, e2, p1;

        box {QSP} q1[0];
        cnot q2[1] | q1[0];

        align q1, q2, e1, qi, qi1, e2, p1;
        hspace {0mm} e1;
        [name=v1]
        box {$V_1$} (q2[0], q2[1]);
        [name=edots1]
        text {$\vdots$} e1;
        \draw[decorate, decoration={snake, amplitude=.25mm, segment length=5pt}] (v1) -- (edots1);

        align q1, q2, e1, qi, qi1, e2, p1;
        hspace {1.5mm} e1;
        [name=edots2]
        text {$\vdots$} e1;
        [name=vi]
        box {$V_{i - 1}$} qi[1];
        \draw[decorate, decoration={snake, amplitude=.25mm, segment length=5pt}] (edots2) -- (vi);

        box {$V_{i}$} (qi[0], qi[1], qi1[1]);

        align q1, q2, e1, qi, qi1, e2, p1;
        [name=vi1]
        box {$V_{i + 1}$} (qi1[0], qi1[1]);
        hspace {1.5mm} e2;
        [name=edotsi]
        text {$\vdots$} e2;
        \draw[decorate, decoration={snake, amplitude=.25mm, segment length=5pt}] (edotsi) -- (vi1);

        align q1, q2, e1, qi, qi1, e2, p1;
        hspace {1.5mm} e2;
        [name=edotsp1]
        text {$\vdots$} e2;
        [name=vp1]
        box {$V_{p - 2}$} p1[1];
        \draw[decorate, decoration={snake, amplitude=.25mm, segment length=5pt}] (edotsp1) -- (vp1);

        box {$A$} (p1[0], p1[1]);
        
        align q1, q2, e1, qi, qi1, e2, p1;
        hspace {2.5mm} e2;
        [name=edotsp1]
        text {$\vdots$} e2;
        [name=vp1]
        box {$W_{p - 2}$} p1[1];
        \draw[decorate, decoration={snake, amplitude=.25mm, segment length=5pt}] (edotsp1) -- (vp1);

        align q1, q2, e1, qi, qi1, e2, p1;
        [name=vi1]
        box {$W_{i + 1}$} (qi1[0], qi1[1]);
        hspace {2.5mm} e2;
        [name=edotsi]
        text {$\vdots$} e2;
        \draw[decorate, decoration={snake, amplitude=.25mm, segment length=5pt}] (edotsi) -- (vi1);

        box {$W_{i}$} (qi[0], qi[1], qi1[1]);

        align q1, q2, e1, qi, qi1, e2, p1;
        hspace {2.5mm} e1;
        [name=edots2]
        text {$\vdots$} e1;
        [name=vi]
        box {$W_{i - 1}$} qi[1];
        \draw[decorate, decoration={snake, amplitude=.25mm, segment length=5pt}] (edots2) -- (vi);

        align q1, q2, e1, qi, qi1, e2, p1;
        [name=v1]
        box {$W_1$} (q2[0], q2[1]);
        hspace {1.25mm} e1;
        [name=edots1]
        text {$\vdots$} e1;
        \draw[decorate, decoration={snake, amplitude=.25mm, segment length=5pt}] (v1) -- (edots1);

        cnot q2[1] | q1[0];
    \end{yquant}
\end{tikzpicture}
    \caption{Modified \textbf{Phase 1} for the general $p$-QPU case.
    The first $\lceil \log (k+1) \rceil$ qubits in $\QPU_i$ are used to encode $\ket{j_i}$, which specifies the local state $D(n/p, j_i)$ to be prepared in the subsequent phases. $\QPU_0$ allocates the amplitudes of $\ket{j_0}$ using the subcircuit QSP. For $i = 1, 2, \dots, p-2$, $\QPU_i$ allocates the amplitudes of $\ket{j_i}$ and computes the partial sum $\ket{J_i}$ within the subcircuit $V_i$ (See Figure \ref{fig:pQPUphase1V}.), using $\ket{J_{i-1}}$ as input, where the register $\ket{J_{i-1}}$ is copied to a second set of $\lceil \log (k+1) \rceil$ qubits by $\QPU_{i-1}$. The final processor, $\QPU_{p-1}$, only needs to compute $\ket{j_{p-1}}$ via the subcircuit $A$, which implements the operation $k - J_{p-2}$. Finally, the temporarily stored partial sums are uncomputed using the subcircuits $W_i$ (See Figure \ref{fig:pQPUphase1W}.) for $i = p-2, p-3, \dots, 1$, followed by a final layer of $\lceil \log (k+1) \rceil$ CNOT gates to complete the cleanup.}
    \label{fig:pQPUphase1}
\end{figure*}
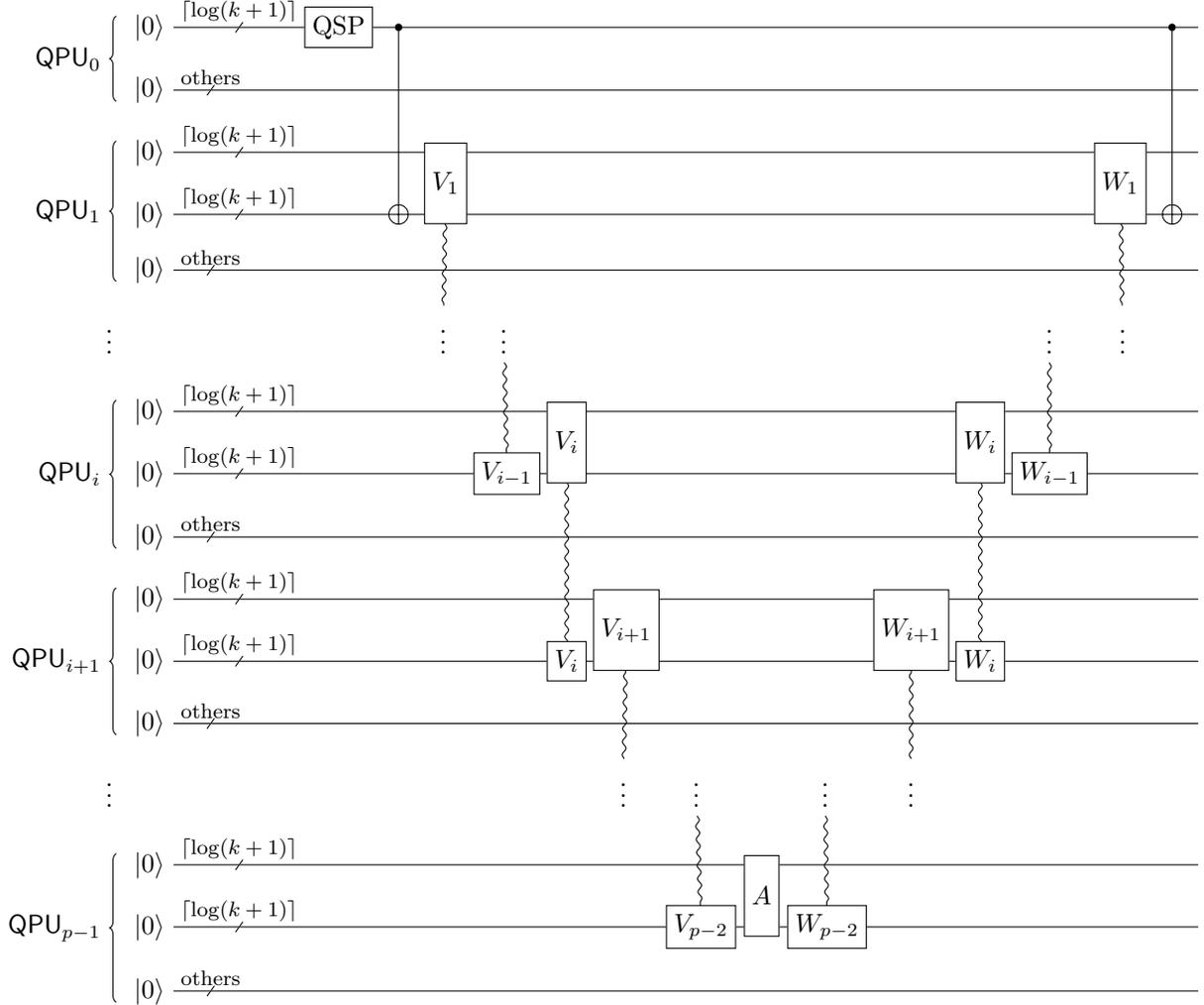

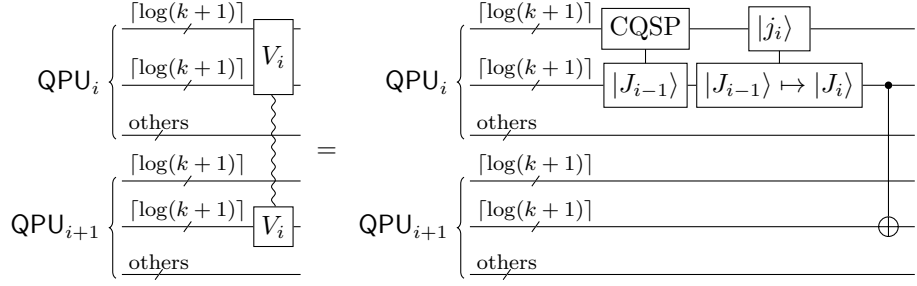
\begin{figure*}[hbtp]
    \centering
    \begin{tikzpicture}
    \begin{yquantgroup}
        \registers{
            qubit {} q1[3];
            qubit {} q2[3];
        }
        \circuit{
            init {$\QPU_{i}$} (q1);
            init {$\QPU_{i + 1}$} (q2);

            ["north: $\lceil\log(k + 1)\rceil$" {font=\protect\footnotesize, inner sep=0pt}]
            slash q1[0];
            ["north: $\lceil\log(k + 1)\rceil$" {font=\protect\footnotesize, inner sep=0pt}]
            slash q1[1];
            ["north: others" {font=\protect\footnotesize, inner sep=0pt}]
            slash q1[2];
            ["north: $\lceil\log(k + 1)\rceil$" {font=\protect\footnotesize, inner sep=0pt}]
            slash q2[0];
            ["north: $\lceil\log(k + 1)\rceil$" {font=\protect\footnotesize, inner sep=0pt}]
            slash q2[1];
            ["north: others" {font=\protect\footnotesize, inner sep=0pt}]
            slash q2[2];
            
            box {$V_{i}$} (q1[0], q1[1], q2[1]);
        }
        \equals
        \circuit{
            init {$\QPU_{i}$} (q1);
            init {$\QPU_{i + 1}$} (q2);

            ["north: $\lceil\log(k + 1)\rceil$" {font=\protect\footnotesize, inner sep=0pt}]
            slash q1[0];
            ["north: $\lceil\log(k + 1)\rceil$" {font=\protect\footnotesize, inner sep=0pt}]
            slash q1[1];
            ["north: others" {font=\protect\footnotesize, inner sep=0pt}]
            slash q1[2];
            ["north: $\lceil\log(k + 1)\rceil$" {font=\protect\footnotesize, inner sep=0pt}]
            slash q2[0];
            ["north: $\lceil\log(k + 1)\rceil$" {font=\protect\footnotesize, inner sep=0pt}]
            slash q2[1];
            ["north: others" {font=\protect\footnotesize, inner sep=0pt}]
            slash q2[2];

            [name=cqsp1]
            box {\Ifnum\idx<1 CQSP\Else $\ket{J_{i - 1}}$\Fi} q1[0], q1[1];
            \draw (cqsp1-0) -- (cqsp1-1);
            [name=add]
            box {\Ifnum\idx<1 $\ket{j_i}$ \Else $\ket{J_{i - 1}}\mapsto \ket{J_i}$\Fi} q1[0], q1[1];
            \draw (add-0) -- (add-1);
            cnot q2[1] | q1[1];
        }
    \end{yquantgroup}
    \end{tikzpicture}
    \caption{The subcircuit $V_i, i = 1, 2, \dots, p - 2$ in Figure \ref{fig:pQPUphase1}. Conditioned on the total Hamming weight $\ket{J_{i-1}}$ accumulated from the previous QPUs, $\QPU_i$ allocates the amplitudes of $\ket{j_i}$ according to their frequency in $F_p$. The register $\ket{j_i}$ is then combined with $\ket{J_{i-1}}$ to form $\ket{J_i}$, which is subsequently forwarded to $\QPU_{i+1}$.}
    \label{fig:pQPUphase1V}
\end{figure*}

\begin{figure*}[hbtp]
    \centering
    \begin{tikzpicture}
    \begin{yquantgroup}
        \registers{
            qubit {} q1[3];
            qubit {} q2[3];
        }
        \circuit{
            init {$\QPU_{i}$} (q1);
            init {$\QPU_{i + 1}$} (q2);

            ["north: $\lceil\log(k + 1)\rceil$" {font=\protect\footnotesize, inner sep=0pt}]
            slash q1[0];
            ["north: $\lceil\log(k + 1)\rceil$" {font=\protect\footnotesize, inner sep=0pt}]
            slash q1[1];
            ["north: others" {font=\protect\footnotesize, inner sep=0pt}]
            slash q1[2];
            ["north: $\lceil\log(k + 1)\rceil$" {font=\protect\footnotesize, inner sep=0pt}]
            slash q2[0];
            ["north: $\lceil\log(k + 1)\rceil$" {font=\protect\footnotesize, inner sep=0pt}]
            slash q2[1];
            ["north: others" {font=\protect\footnotesize, inner sep=0pt}]
            slash q2[2];
            
            box {$W_{i}$} (q1[0], q1[1], q2[1]);
        }
        \equals
        \circuit{
            init {$\QPU_{i}$} (q1);
            init {$\QPU_{i + 1}$} (q2);

            ["north: $\lceil\log(k + 1)\rceil$" {font=\protect\footnotesize, inner sep=0pt}]
            slash q1[0];
            ["north: $\lceil\log(k + 1)\rceil$" {font=\protect\footnotesize, inner sep=0pt}]
            slash q1[1];
            ["north: others" {font=\protect\footnotesize, inner sep=0pt}]
            slash q1[2];
            ["north: $\lceil\log(k + 1)\rceil$" {font=\protect\footnotesize, inner sep=0pt}]
            slash q2[0];
            ["north: $\lceil\log(k + 1)\rceil$" {font=\protect\footnotesize, inner sep=0pt}]
            slash q2[1];
            ["north: others" {font=\protect\footnotesize, inner sep=0pt}]
            slash q2[2];

            cnot q2[1] | q1[1];
            [name=sub]
            box {\Ifnum\idx<1 $\ket{j_i}$ \Else $\ket{J_i}\mapsto \ket{J_{i - 1}}$\Fi} q1[0], q1[1];
            \draw (sub-0) -- (sub-1);
        }
    \end{yquantgroup}
    \end{tikzpicture}
    \caption{The subcircuit $W_i, i = 1, 2, \dots, p - 2$ in Figure \ref{fig:pQPUphase1}. After eliminating $\ket{J_i}$ in $\QPU_{i+1}$, the partial sum $\ket{J_i}$ stored in $\QPU_i$ is restored to $\ket{J_{i-1}}$ via a subtraction subcircuit, enabling further uncomputation.}
    \label{fig:pQPUphase1W}
\end{figure*}
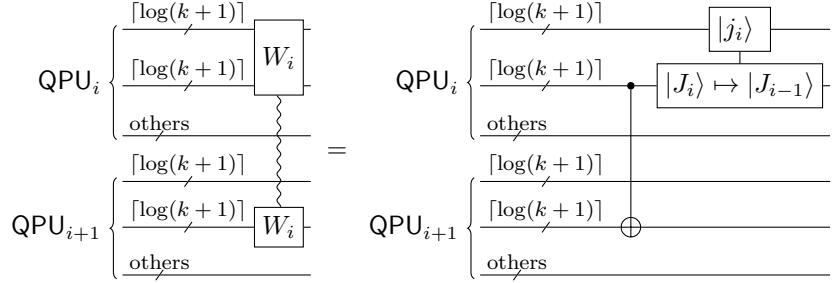

For notation convenience, define the partial sums $J_i = j_0 + j_1 + \cdots j_i$ for $i \in [p]$, and set
\begin{equation*}
    \alpha_{J_i}
    = \sum_{\substack{j_{i+1}, \dots, j_{p-1} \\ \text{such that } J\in F_p}}
    \binom{n/p}{j_{i+1}}
    \cdots
    \binom{n/p}{j_{p-1}},
\end{equation*}
which is well-defined by the definition of $F_p$. Moreover, we will henceforth use $k$ in place of $k'$. Since $k' = \min(n/p, k)\le k$, this substitution does not affect our upper-bound analysis.

Figure \ref{fig:pQPUphase1} depicts the circuit implementing the modified Phase 1 for the general $p$-QPU case. The procedure is carried out according to the following steps.

\begin{itemize}
    \item $\QPU_0$ prepares the state
    \begin{equation*}
    \frac{1}{\sqrt{\binom{n}{k}}} \sum_{\substack{j_0\in J \\ \text{such that } \exists J\in F_p}}
    \sqrt{\binom{n/p}{j_0}} \sqrt{\alpha_{J_0}} \ket{j_0}
    \end{equation*}
    using $\lceil \log(k+1) \rceil$ qubits, which is depicted in Figure \ref{fig:pQPUphase1} as the gate $QSP$. It then copies these qubits to $\QPU_1$ via a single layer of $\lceil \log(k+1) \rceil$ CNOT gates. By utilizing all available $(n/p+2)$ qubits, Lemma~\ref{QSP} implies that the resulting circuit has size $O(k)$ and depth $O(\log k)$.
    
    \item For $i = 1, 2, \dots, p-2$, $\QPU_i$ applies Lemma~\ref{ctrlQSP} to implement the controlled quantum state preparation
    \begin{equation*}
        \ket{J_{i-1}} \ket{0} 
        \mapsto
        \ket{J_{i-1}} \frac{1}{\sqrt{\alpha_{J_{i-1}}}}
        \sum_{\substack{j_i\in J \\ \text{such that } \exists J\in F_p}}
        \sqrt{\binom{n/p}{j_i}} \sqrt{\alpha_{J_i}} \ket{j_i}
    \end{equation*}
    on $2\lceil \log(k+1) \rceil$ qubits. The register $\ket{j_i}$ is then added to $\ket{J_{i-1}}$ to obtain $\ket{J_i}$, namely,
    \begin{equation*}
        \ket{J_{i-1}} \ket{j_i}
        \mapsto
        \ket{J_{i-1} + j_i} \ket{j_i}
        = \ket{J_i} \ket{j_i}.
    \end{equation*}
    Subsequently, $\ket{J_i}$ is copied to $\QPU_{i+1}$ using a single layer of $\lceil \log(k+1) \rceil$ CNOT gates. This operation is depicted in Figure \ref{fig:pQPUphase1} as the gate $V_i$, with further details shown in Figure \ref{fig:pQPUphase1V}. Again, by exploiting all $(n/p+2)$ available qubits, Lemmas \ref{ctrlQSP} and \ref{adder} together imply that the resulting circuit has size $O(k^2)$ and depth $O(pk)$.
    
    \item Upon reaching $\QPU_{p-1}$, the copied register $\ket{J_{p-2}}$ suffices to determine the remaining Hamming weight. Therefore, $\QPU_{p-1}$ only needs to implement
    \begin{equation*}
        \ket{J_{p-2}} \ket{0}
        \mapsto
        \ket{J_{p-2}} \ket{k - J_{p-2}}
        =\ket{J_{p-2}} \ket{j_{p-1}},
    \end{equation*}
    thereby completing the preparation, which is depicted in Figure \ref{fig:pQPUphase1} as the gate $A$. By Lemma~\ref{adder}, both the circuit size and depth are $O(\log k)$.
    
    \item Finally, for $i = p-2, p-3, \dots, 1$, we uncompute the auxiliary registers $\ket{J_i}$ stored in $\QPU_{i + 1}$. Specifically, we first erase the register $\ket{J_i}$ in $\QPU_{i + 1}$ using its copy in $\QPU_i$ via a layer of $\lceil \log(k+1) \rceil$ CNOT gates. Then we apply
    \begin{equation*}
        \ket{J_i} \ket{j_i}
        \mapsto
        \ket{J_i - j_i} \ket{j_i}
        = \ket{J_{i-1}} \ket{j_i}
    \end{equation*}
    within $\QPU_i$ to prepare for the next iteration. This operation is depicted in Figure \ref{fig:pQPUphase1} as the gate $W_i$, with further details shown in Figure \ref{fig:pQPUphase1W}.  Upon reaching $\QPU_0$, a final layer of $\lceil \log(k+1) \rceil$ CNOT gates suffices to complete the uncomputation. By Lemma \ref{adder}, this entire uncomputation step incurs circuit size and depth $O(\log k)$.
\end{itemize}

In this phase, our circuit achieves the size of
\begin{equation} \label{mphase1size}
    O(k) + O(pk^2) + O(\log k) + O(p\log k) = O(pk^2).
\end{equation}
and the depth of
\begin{equation} \label{mphase1depth}
    O(\log k) + O(p^2 k) + O(\log k) + O(p\log k) = O(p^2 k)
\end{equation}
The number of communication gates across the $p$ QPUs is $O(p \log k)$. 

\textbf{Part 3.}

Now we combine this with Phase 2 and 3 within each QPU, as described in section \ref{sec:2QPU}. Summing the contributions from equations \ref{mphase1size}, \ref{phase2size}, and \ref{phase3size}, the total circuit size is
\begin{equation*}
    O(pk^2) + O(k) + O(nk) = O(nk).
\end{equation*}
Similarly, combining the bounds from equations \ref{mphase1depth}, \ref{phase2depth}, and \ref{phase3depth}, the total circuit depth satisfies
\begin{equation*}
    O(p^2 k) + O(\log^2 k) + O(k + \log k \log(n/k)) = O(p^2 k + \log k \log(n/k)),
\end{equation*}
The equation follows from our replacement of $k'$ with $k$. The communication complexity is $O(p \log k)$ as analyzed in Part 2. This completes the proof of Theorem \ref{pQPUthm}.
\end{proof}

\section{Lower Bound of Communication Complexity} \label{sec:lower}

In this section, we investigate lower bounds on communication complexity. Specifically, we show that preparing a state $\ket{\psi}$ across $p$ QPUs requires a communication complexity that is lower bounded by the logarithm of the CP-rank of a $p$-order tensor. This tensor is obtained from the state vector of $\ket{\psi}$ by reshaping it according to the qubit assignment among the QPUs. Unfortunately, computing the CP-rank of a general $p$-order tensor is NP-hard for $p \ge 3$\cite{haastad1990tensor}, and we are therefore unable to evaluate the CP-rank of the tensor associated with the Dicke state in this regime. However, for the case $p = 2$, the CP-rank reduces to the matrix rank. In this setting, we explicitly compute the rank of the Dicke state’s state matrix and show that our construction for $p = 2$ achieves this lower bound.

\subsection{$p$-QPU Distributed Preparation for General States}

We first introduce the state tensor of a quantum state.

\begin{definition}[State Tensor] \label{statetensor}
    Let $f: [n] \to [p]$ be a partition that assigns the $n$ qubits to $p$ QPUs, and let $Q_j = f^{-1}(j)$ denote the set of qubits assigned to $\QPU_j$. Any quantum state $\ket{\psi}$ distributed across the $p$ QPUs can be written as
    \begin{equation*}
        \ket{\psi} = \sum_{\substack{i_j \in \left[ 2^{|Q_j|} \right] \\ j \in [p]}} T^f_{i_0 i_1 \dots i_{p-1}} \ket{i_0 i_1 \dots i_{p-1}},
    \end{equation*}
    where $i_j$ indexes the computational basis states of $\QPU_j$ for each $j \in [p]$. Collecting these coefficients, we define the tensor
    \begin{equation*}
        T^f_{\ket{\psi}} = \left(T^f_{i_0 i_1 \dots i_{p-1}}\right)
    \end{equation*}
    to be the \emph{state tensor} associated with the state $\ket{\psi}$ and the partition $f$.
\end{definition}

Intuitively, a local unitary operation acting on $\QPU_j$ corresponds to a linear transformation applied along mode $j$ of the state tensor. As we show below, the communication complexity required to prepare $\ket{\psi}$ is closely related to the CP-rank of its associated state tensor.

\begin{theorem} \label{pQPUlowerbound}
    For any quantum state $\ket{\psi}$ and the number of QPUs $p$, the communication complexity satisfies
    \begin{equation*}
        \cc_p(\ket{\psi}) \ge \min_{\mathrm{balanced} f:[n]\to [p]} \log \CPrank\left(T^f_{\ket{\psi}}\right).
    \end{equation*}
\end{theorem}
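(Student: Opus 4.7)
The plan is to track how the CP-rank of the current state tensor evolves along the computation, starting from $\ket{0}^{\otimes n}$ and ending at $\ket{\psi}$, and to show that each non-local two-qubit gate can at most double this quantity. The strategy is an ``operator-Schmidt'' style argument, well suited to tensors since the CP-rank is sub-additive in exactly the way we need.

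First I would fix an arbitrary circuit $C$ achieving $\cc_p(\ket{\psi})$ under some balanced partition $f^*$, and analyze the state tensor after each gate layer with respect to $f^*$. For the initial state $\ket{0}^{\otimes n}$, the associated state tensor factors as the outer product of the $p$ standard basis vectors $e_0^{(j)}$ (one per QPU), so $\CPrank(T^{f^*}_{\ket{0}^{\otimes n}}) = 1$. Next I would observe that any gate whose operands all lie in a single $\QPU_j$---i.e. a single-qubit gate or an intra-QPU CNOT---acts as a linear map along mode $j$ of the state tensor only. Given a CP-decomposition $T = \sum_{s=1}^{r} v_s^{(0)} \otimes \cdots \otimes v_s^{(p-1)}$, applying such a local unitary replaces $v_s^{(j)}$ by $U_{\text{local}} v_s^{(j)}$, producing a CP-decomposition of the same length. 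Hence local gates do not increase the CP-rank.

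The key step is the treatment of a non-local CNOT acting between $\QPU_{i}$ and $\QPU_{j}$. Viewed on the two affected qubits, $\mathrm{CNOT} = \ket{0}\bra{0} \otimes I + \ket{1}\bra{1} \otimes X$, which, after being extended by identity on all other qubits of $\QPU_i$ and $\QPU_j$, can be written as $A_1 \otimes B_1 + A_2 \otimes B_2$, where $A_\ell$ acts on the mode-$i$ space and $B_\ell$ on the mode-$j$ space. Applying this to a CP-decomposition of length $r$ yields
\begin{equation*}
    T' = \sum_{\ell = 1}^{2} \sum_{s=1}^{r} v_s^{(0)} \otimes \cdots \otimes \bigl(A_\ell v_s^{(i)}\bigr) \otimes \cdots \otimes \bigl(B_\ell v_s^{(j)}\bigr) \otimes \cdots \otimes v_s^{(p-1)},
\end{equation*}
an expression of length at most $2r$. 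So each non-local CNOT at most doubles the CP-rank. Iterating this over all gates of $C$, the final state tensor satisfies $\CPrank(T^{f^*}_{\ket{\psi}}) \le 2^{\cc_p(\ket{\psi})}$, and since $f^*$ is balanced, minimizing over balanced partitions gives the desired inequality.

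The main subtlety I anticipate is purely bookkeeping: making precise the ``extension by identity'' when the CNOT acts on two specific qubits inside the multi-qubit mode spaces of $\QPU_i$ and $\QPU_j$, and verifying that the linearity of the outer product really does let one split the two Schmidt terms through the sum over $s$ without any cross-terms. Once this is handled cleanly, the argument is essentially a one-line induction on the gate count, and no structural properties of $\ket{\psi}$ are used. I do not expect any difficulty specific to balancedness of the partition, as the bound is proved for the particular partition witnessing $\cc_p(\ket{\psi})$ and then relaxed by taking a minimum.
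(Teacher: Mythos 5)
Your proposal is correct and follows essentially the same route as the paper's proof: an induction showing that local operations leave the CP-rank of the evolving state tensor unchanged while each unit of communication at most doubles it, starting from $\CPrank = 1$ for $\ket{0}^{\otimes n}$ and concluding $\CPrank\left(T^{f^*}_{\ket{\psi}}\right) \le 2^{\cc_p(\ket{\psi})}$. The only cosmetic difference is that the paper phrases the doubling step as a two-term splitting of a transmitted qubit that may be entangled with the rest of its QPU, whereas you use the operator-Schmidt decomposition $\ket{0}\bra{0}\otimes I + \ket{1}\bra{1}\otimes X$ of the non-local CNOT itself, which if anything matches Definition~\ref{cc}'s gate-counting model more directly.
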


\begin{proof}
    We prove the theorem via its contrapositive. Specifically, we show that for any quantum state $\ket{\psi}$ and the number of QPUs $p$, if the communication complexity of $\ket{\psi}$ on $p$ QPUs, $\cc_p(\ket{\psi}) = l$ for some integer $l \ge 0$, then for all balanced partition $f: [n]\to [p]$, the CP-rank of the corresponding state tensor satisfies $\CPrank\left(T^f_{\ket{\psi}}\right) \le 2^l$.
    
    We proceed by induction on the communication complexity $l$ for any balanced partition $f: [n]\to [p]$. For the base case $l = 0$, no communication is allowed, so the QPUs can only prepare a product state $\ket{\psi_0}$, where $\CPrank\left(T^f_{\ket{\psi_0}}\right) = 1$.

    Next, we consider the induction step. Suppose, by the induction hypothesis, that for communication complexity $l - 1$ we obtain
    \begin{equation*}
        \ket{\psi_{l - 1}} = \sum_{w \in [2]^{l - 1}} \ket{v_0(w), v_1(w), \dots, v_{p - 1}(w)}.
    \end{equation*}
    Without loss of generality, assume that the protocol requires $\QPU_0$ to perform a local operation and then communicate with $\QPU_1$. Let
    \begin{equation*}
        \ket{\psi'_{l - 1}} = \sum_{w \in [2]^{l - 1}} \ket{v'_0(w), v_1(w), \dots, v_{p - 1}(w)}
    \end{equation*}
    denote the state after the local operation on $\QPU_0$. When $\QPU_0$ sends a qubit to $\QPU_1$, it must first separate this qubit from its local system. In general, this qubit may be entangled with the other qubits in $\QPU_0$, so the resulting state can be written as
    \begin{equation*}
        \ket{\psi_l} = \ket{\psi'_{l - 1}} = \sum_{w \in [2]^{l - 1},\, z \in [2]} \ket{v''_0(w, z), c(w, z), v_1(w), \dots, v_{p - 1}(w)},
    \end{equation*}
    where $c(w, z)$ represents the communicated qubit and
    \begin{equation*}
        \ket{v'_0(w)} = \sum_{z \in [2]} \ket{v''_0(w, z), c(w, z)}.
    \end{equation*}
    It follows that $\CPrank\left(T^f_{\ket{\psi_l}}\right) \le 2^l$, as required.
\end{proof}

\subsection{$2$-QPU Distributed Preparation for Dicke States}

We establish a lower bound on the communication complexity required to prepare Dicke states across $2$ QPUs as follows.

\begin{theorem} \label{2QPUlowerboundDicke}
    To prepare $D(n,k)$ across two quantum processing units (QPUs), each holding $n/2$ qubits, the communication complexity satisfies 
    \begin{equation*}
        \cc_2(D(n, k)) \ge \lceil\log(k + 1)\rceil.
    \end{equation*}
\end{theorem}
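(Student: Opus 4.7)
The plan is to apply Theorem \ref{pQPUlowerbound} specialized to $p = 2$, in which case the CP-rank of a $2$-order tensor coincides with the usual matrix rank. Thus it suffices to exhibit a balanced partition-free lower bound: for every balanced partition $f:[n]\to[2]$ with $|f^{-1}(0)| = |f^{-1}(1)| = n/2$, I will show that the state matrix $T^f_{D(n,k)}$ has rank exactly $k+1$ (assuming, as in Section~\ref{sec:2QPU}, that $k \le n/2$). Since the communication complexity is an integer, taking the ceiling of $\log(k+1)$ will then yield the claimed bound.

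The core computation is to read off the rank from the definition of $D(n,k)$. For a balanced partition $f$, any basis string $\ket{x}$ with $\ham(x)=k$ decomposes uniquely as $\ket{i_0}\ket{i_1}$ with $\ham(i_0)+\ham(i_1)=k$, so I can write
\begin{equation*}
    T^f_{D(n,k)} \;=\; \frac{1}{\sqrt{\binom{n}{k}}}\sum_{j=0}^{k} v_j\, u_{k-j}^{\top},
\end{equation*}
where $v_j \in \mathbb{R}^{2^{n/2}}$ is the indicator vector of all $n/2$-bit strings of Hamming weight $j$, and similarly for $u_{k-j}$. This expresses the matrix as a sum of $k+1$ rank-one terms, giving $\rank(T^f_{D(n,k)}) \le k+1$ immediately.

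For the matching lower bound on the rank, I would observe that the vectors $v_0, v_1, \dots, v_k$ have pairwise disjoint supports in $\{0,1\}^{n/2}$ (indexed by Hamming weight) and are therefore linearly independent; likewise for $u_0, u_1, \dots, u_k$. Hence the column space of $T^f_{D(n,k)}$ contains $k+1$ linearly independent vectors $v_0, \dots, v_k$ (each obtained as a scalar multiple of a column of $T^f_{D(n,k)}$ whose column index has the appropriate Hamming weight $k-j$), so $\rank(T^f_{D(n,k)}) \ge k+1$. Combining with Theorem \ref{pQPUlowerbound} and using integrality of $\cc_2$ gives $\cc_2(D(n,k)) \ge \lceil \log(k+1)\rceil$.

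I do not expect any serious obstacle here; the only subtlety is verifying that the argument goes through for \emph{every} balanced partition $f$, not just the canonical one that splits the qubits as the first $n/2$ versus the last $n/2$. But this is automatic: $D(n,k)$ is permutation-invariant, so the state matrix under any balanced $f$ is obtained from the canonical one by permuting rows and columns, which does not change the rank. This invariance is what lets the $\min_f$ in Theorem \ref{pQPUlowerbound} evaluate to $\log(k+1)$ rather than something smaller.
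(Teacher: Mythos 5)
Your proposal is correct and takes essentially the same route as the paper's proof: both specialize Theorem \ref{pQPUlowerbound} to $p=2$, invoke the permutation invariance of $D(n,k)$ to reduce to a single partition, and establish $\rank\left(T^f_{D(n,k)}\right) = k+1$ (under $k \le n/2$) from the Hamming-weight structure. Your explicit rank-one decomposition $\sum_{j=0}^{k} v_j u_{k-j}^{\top}$ is just a restatement of the paper's observation that rows depend only on $\ham(x)$, and your disjoint-support linear-independence argument for the matching lower bound is the same as the paper's.
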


Recall Theorem \ref{2QPUthm}. By Theorem \ref{2QPUlowerboundDicke}, we show that when $p = 2$, our construction for preparing Dicke states attains the lower bound on communication complexity.

The proof of Theorem \ref{2QPUlowerboundDicke} relies directly on Theorem \ref{pQPUlowerbound}. In the special case where $p = 2$, the CP-rank of the state tensor coincides with the rank of the state matrix, which is considerably easier to compute.

\begin{proof}
    By definition, $D(n, k)$ is symmetric under permutations of qubits. Consequently, it is unnecessary to specify the partition $f$ in the proof. Therefore, let $M_{D(n,k)} = (a_{x,y})$ be the matrix defined in Definition \ref{statetensor} when $p = 2$, and let $R_x$ denote the row of $M_{D(n,k)}$ indexed by $x$. By definition,
    \begin{equation*}
        a_{x,y} = \begin{cases}
        1, & \text{if } \ham(x) + \ham(y) = k, \\
        0, & \text{otherwise}.
        \end{cases}
    \end{equation*}
    
    We first observe that each row $R_x$ depends only on the Hamming weight $\ham(x)$. Indeed, if $\ham(x_1) = \ham(x_2)$, then for every $y$,
    \begin{equation*}
        \ham(x_1) + \ham(y) = k
        \Longleftrightarrow
        \ham(x_2) + \ham(y) = k,
    \end{equation*}
    and hence
    \begin{equation*}
        R_{x_1} = R_{x_2}.
    \end{equation*}
    Since $\ham(x)$ ranges over $\{0,1,\dots,k\}$ for rows that are not identically zero, there are at most $k+1$ distinct rows. Therefore,
    \begin{equation*}
        \rank(M_{D(n,k)}) \le k + 1.
    \end{equation*}
    
    On the other hand, if $\ham(x_1) \neq \ham(x_2)$, then the supports of the corresponding rows are disjoint. Indeed, there is no $y$ such that
    \begin{equation*}
        \ham(x_1) + \ham(y) = k
        \text{and}
        \ham(x_2) + \ham(y) = k
    \end{equation*}
    hold simultaneously. Consequently,
    \begin{equation*}
        \supp(R_{x_1}) \cap \supp(R_{x_2}) = \varnothing.
    \end{equation*}
    It follows that the rows corresponding to distinct Hamming weights are linearly independent, and thus
    \begin{equation*}
        \rank(M_{D(n,k)}) \ge k + 1.
    \end{equation*}
    
    Combining the above inequalities, we conclude that
    \begin{equation*}
        \rank(M_{D(n,k)}) = k + 1.
    \end{equation*}
    
    Substituting this into Theorem \ref{pQPUlowerbound}, we obtain
    \begin{equation*}
        \cc_2(D(n,k))
        \ge \left\lceil \log \rank(M_{D(n,k)}) \right\rceil
        = \left\lceil \log (k + 1) \right\rceil.
    \end{equation*}
\end{proof}

\section{Conclusion} \label{sec:conclusion}

This work addresses the critical scalability challenge in preparing large-qubit $k$-excitation Dicke states $D(n,k)$ by exploring their distributed preparation across a general number $p$ of QPUs. We present a novel distributed quantum circuit architecture that achieves both logarithmic communication complexity and polynomial circuit depth and size for distributed Dicke state preparation. Additionally, we establish a general lower bound on the communication complexity of $p$-QPU distributed state preparation, formulated via the CP-rank of a tensor associated with the target state. For $p = 2$, we explicitly compute the CP-rank of the corresponding Dicke state tensor, derive a lower bound of $\log k$, and verify that our construction’s communication complexity matches this fundamental limit, confirming its optimality in this case.

While this work advances the state of the art in distributed quantum state preparation, it also opens avenues for future research. We conjecture that our upper bound on communication complexity is tight for $p \geq 3$, though we have not yet proven this assertion. A key direction for future work is to derive the CP-rank of the Dicke state-associated tensor for $p \geq 3$---a result that would enable rigorous verification of our conjecture, establish general optimality bounds for distributed Dicke state preparation, and deepen the understanding of tensor ranks in the context of multi-QPU distributed quantum systems. Further extensions could also explore the robustness of our circuit to noise in inter-QPU communication channels, as well as its practical implementation on emerging distributed quantum hardware.


\balance
\bibliographystyle{IEEEtran} 
\bibliography{ref}

\end{document}